\newcounter{algsubstate}
 \newcommand{\acknowledgements}[1]{%
  \nonumnote{#1}}
\definecolor{myblue}{rgb}{0.8,0.8,1}
\definecolor{myred}{rgb}{1,0.8,0.8}
\definecolor{mygreen}{rgb}{0.8,1,0.8}
\definecolor{mygrey}{rgb}{220,220,220}
\DeclareMathAlphabet{\xcal}{OMS}{cmsy}{m}{n}
\definecolor{dbblue}{RGB}{10,65,155}				
\definecolor{dbred}{RGB}{215,0,50}					
\definecolor{blue}{RGB}{0,113.9850,188.9550} 			
\definecolor{red}{RGB}{216.7500,82.8750,24.9900} 		
\definecolor{green}{RGB}{118.8300,171.8700,47.9400} 	
\definecolor{grey}{RGB}{110,110,110}				
\definecolor{lgrey}{RGB}{210,210,210}				
\definecolor{c1}{RGB}{0,113.9850,188.9550}			
\definecolor{c2}{RGB}{216.7500,82.8750,24.9900}		
\definecolor{c3}{RGB}{236.8950,176.9700,31.8750}		
\definecolor{c4}{RGB}{125.9700,46.9200,141.7800}		
\definecolor{c5}{RGB}{118.8300,171.8700,47.9400}		
\definecolor{c6}{RGB}{76.7550,189.9750,237.9150}		
\definecolor{c7}{RGB}{161.9250,19.8900,46.9200}		
\definecolor{c14}{RGB}{0,102,102}					
\def\E{\mathbb{E}}
\def\ps@pprintTitle{%
  \let\@oddhead\@empty
  \let\@evenhead\@empty
  \def\@oddfoot{\reset@font\hfil\thepage\hfil}
  \let\@evenfoot\@oddfoot
}
\newtheorem{thm}{Theorem}
\newtheorem{prop}{Proposition}
\newtheorem{assume}{Assumption}
\definecolor{cadmiumgreen}{rgb}{0.0, 0.42, 0.24}
\newcommand{\Ac}{\mathcal{A}}
\renewcommand{\E}{\mathbbm{E}}
\newcommand{\F}{\mathbbm{F}}
\newcommand{\N}{\mathbbm{N}}
\renewcommand{\P}{\mathbbm{P}}
\newcommand{\R}{\mathbbm{R}}
\DeclareMathOperator{\arctanh}{arctanh}
\begin{document}

\title{\textit{Forthcoming in Journal of Economic Dynamics and Control}
\\~\\ \textbf{Decentralised Finance and Automated Market Making: Execution and Speculation}}
\author[omi,math]{\'{A}lvaro Cartea}
\author[omi]{Fayçal Drissi}
\author[omi,math]{Marcello Monga}

\address[omi]{Oxford-Man Institute of Quantitative Finance, Oxford, UK}
\address[math]{Mathematical Institute, University of Oxford, Oxford, UK}

\date{November 2023}
\journal{TBC}
\acknowledgements{We are grateful to Tarek Abou Zeid, Philippe Bergault, Patrick Chang, Olivier Guéant, Sebastian Jaimungal, Anthony Ledford, Andre Rzym, and Leandro Sánchez-Betancourt for insightful comments. We are also grateful to participants at the Oxford--ETH Workshop (2022), the BlockSem seminar at École Polytechnique (2022), the Mathematical \& Computational Finance seminar at Oxford (2022), the KCL Financial Mathematics seminar (2022), the $67^\text{th}$  EWGCFM meeting (2023), the Oxford--Princeton Workshop (2024), and the Oxford Victoria Seminar (2024). MM acknowledges financial support from the EPSRC Centre for Doctoral Training in Mathematics of Random Systems: Analysis, Modelling and Simulation (EP/S023925/1). }

\begin{frontmatter}

\begin{abstract}
{Automated market makers (AMMs) are a new prototype of decentralised exchanges which are revolutionising market interactions. The majority of AMMs are constant product markets (CPMs) where exchange rates are set by a trading function. This work studies optimal trading and statistical arbitrage in CPMs where balancing exchange rate risk and execution costs is key. Empirical evidence shows that execution costs are accurately estimated by the convexity of the trading function. These convexity costs are linear in the trade size and are nonlinear in the depth of liquidity and in the exchange rate. We develop models for when exchange rates form in a competing centralised exchange, in a CPM, or in both venues. Finally, we derive computationally efficient strategies that account for stochastic convexity costs and we showcase their out-of-sample performance.}
\end{abstract}

\begin{keyword}
Decentralised finance, blockchains, automated market making, smart contracts, algorithmic trading, statistical arbitrage, predictive signals.
\end{keyword}

\end{frontmatter}

\section{Introduction \label{sec:Introduction}}

Decentralised Finance (DeFi) is a collective term for blockchain-based financial services that do not rely on intermediaries such as brokers or banks.  New powerful technologies are the engine behind the remarkable growth of DeFi, which is changing the financial landscape and is in direct competition with many traditional stakeholders. Within DeFi,  automated market makers (AMMs) are a new paradigm in the design of decentralised trading venues (DEXs) and are revolutionising the way in which market participants provide and take liquidity. Currently, AMMs are mainly exchanges for cryptocurrencies; however, their core concepts go beyond the cryptocurrency sector and they are poised to challenge traditional centralised exchanges (CEX) in all asset classes.

At present, the majority of AMMs are constant \textbf{function} markets (CFMs), introduced in     \cite{angeris2020improved}. In CFMs, a trading function and a set of rules determine how liquidity takers (LTs) and liquidity providers (LPs) interact, and how markets are cleared.  The trading function is deterministic and known to all market participants.  CFMs display pools of liquidity for pairs of assets, where the exchange rates, i.e., the relative prices between the two assets, are determined by their quantities in the pool, or reserves, as prescribed by the trading function. The trading function establishes the link between liquidity and exchange rates, so LTs can compute the execution costs of their trades as a function of the trade size -- these costs are commonly referred to as slippage or execution costs. A key difference between CFMs and limit order books (LOBs) is that execution costs in CFMs are given by the curvature of the trading function which is known in closed form; see \cite{engel2021presentation} and \cite{angeris2022does}. As in traditional markets that operate an LOB, the larger the size of an order, the higher the execution costs.

Within CFMs, we focus on constant \textbf{product} markets (CPMs), which are the most popular type of CFM and where the trading function uses the product of the reserves to determine clearing rates.  In this paper, we solve the problem of an LT who trades in a CPM to execute a large position in an asset or to execute  statistical arbitrages between the CEX and the DEX.\footnote{Statistical arbitrage is  studied in the context of traditional CEXs and for CEX/DEX arbitrage; see \cite{capponi2023price},  \cite{wang2022cyclic}, \cite{jin2021arbitrage}, \cite{boonpeam2021arbitrage}, \cite{vakhmyanin2023price},  \cite{gogol2024quantifying}, and \cite{li2024pricing}.} We formulate the trading problem as a stochastic control problem in continuous time where the LT controls the speed at which  she sends liquidity taking orders. Key to the performance of the LT's strategies is to balance exchange rate risk and execution costs.  In our model, we use the first-order approximation of the curvature of the trading function to compute execution costs. This approximation is referred to as \textit{convexity costs} and we use Uniswap data to show that 
convexity costs are an accurate estimate of the execution costs studied in \cite{engel2021presentation} and in \cite{angeris2022does}. 
In CPMs, convexity costs are well suited for continuous-time models because they are linear in the trading speed of the LT. Convexity costs are stochastic; specifically, they are inversely proportional to the depth of the pool and proportional to a non-linear transformation of the exchange rate in the pool.  

We use Uniswap data for CPMs that trade pairs of cryptocurrencies to study the empirical properties of this particular DEX. As widely documented in the literature, our analysis shows that rates currently form in the LOBs of alternative electronic CEXs, despite very high levels of activity for many of the pairs traded in Uniswap. Moreover, we find that LP activity is significantly lower than that of LT activity and that current activity and liquidity in AMMs is such that the size of the pool is approximately constant for the trading horizons we consider. Therefore, in our first model, exchange rates form in the CEX. Specifically, we model the lead-follow relationship between the exchange rates in the CPM and those in the CEX, so the LT informs her decisions with the rates in both venues. In this setup, we derive trading strategies to execute large orders and statistical arbitrages. We show that the optimal strategy consists of two components. One component 
unwinds the inventory, and the second 
takes advantage of short-lived discrepancies between the rates in the CPM and those in the CEX. In particular, both components adjust the size of the LT's trades to the stochastic convexity costs in the pool.

In anticipation of the growth of AMMs, and because some assets are  traded only in AMMs, a second version of our model assumes that exchange rates in the DEX are efficient, so CEX/DEX discrepancies are not economically significant. The increase in the efficiency of exchange rates in CFMs will be due to an
increase in the activity of LPs and LTs, which will also result in more changes in the depth of the pool of the CPM. Thus, in our second model, the depth of the pool is stochastic and we show how an LT optimally executes a large position in one of the assets traded in the pool. Finally, in a third  model, rates form  simultaneously in both the CEX and the DEX. In this setup, exchange rates are cointegrated and the LT uses information from both venues to trade optimally. This setup is akin to optimal trading and routing models when an asset is quoted in multiple trading venues; see \cite{frtisch2022economics, angeris2022optimal, henker2024athena}.

In the three models we propose, the optimal strategies are characterised by semilinear partial differential equations (PDEs). The nonlinearity in the PDEs is due to the stochasticity in the convexity costs.  We cannot find closed-form solutions for these equations and solving with numerical techniques is computationally expensive. Thus, we introduce a method that uses piecewise constant convexity costs to approximate the optimal strategy, and we show that the sequence of piecewise-defined strategies converges to a continuous strategy where the trading speed of the LT is determined by the stochastic convexity costs. The LT can deploy the \textit{closed-form approximation strategy} in real time, and we demonstrate that it accurately approximates the optimal strategy in practical scenarios where the misalignment between the CEX and DEX rates is not too large.

We use Uniswap data to illustrate the performance of the liquidation and speculative strategies. The efficient rates are those from Binance where traders interact through a price-time priority LOB. To showcase the performance of our strategy, we use in-sample data to estimate model parameters and  out-of-sample data to execute the strategy in `real time' as an LT would have done. In our analysis, we use rolling-time windows of a few hours between 1 July 2021 and  31 December 2023 to obtain the distribution of the financial performance of the strategies. We look at two pairs of assets, one that is heavily traded and one that is not as frequently traded. We show the superior performance of our liquidation strategy over TWAP and over a strategy that would have executed the whole inventory in one trade at the start of the trading window. In line with \cite{milionis2022quantifying} and \cite{crocswap2022}, our analysis also demonstrates that there are profitable opportunities to execute statistical arbitrages in Uniswap  when the strategy is informed by Binance rates.

Early works on the analysis of the properties of AMMs are in \cite{angeris2019analysis, chiu2019blockchain, lipton2021blockchain, capponi2021adoption}; see also the  literature review in \cite{biais2023advances}. There is a rich literature that studies the economics of AMMs. \cite{lehar2021decentralized} and \cite{capponi2023price}  study rate formation in DEXs, and  \cite{bichuch2022axioms} and \cite{engel2021composing} derive axioms for CFMs. Closer to our work, \cite{engel2021presentation} provide formal derivations for the losses of liquidity providers and execution costs of liquidity takers which we approximate with the convexity costs. On the other hand, \cite{angeris2022does} and \cite{angeris2023replicating} define rate sensitivity and liquidity in AMMs and show that execution costs are closely related to the curvature of the trading function, while \cite{angeris2021CFMM} study liquidity taking in multiple CFMs and \cite{angeris2022optimal} study the optimal routing of liquidity taking orders. 

To the best of our knowledge, this is the first paper to solve optimal execution for LTs in DEXs. Early work on optimal execution in CEXs is in  \cite{bertsimas1998optimal} and \cite{almgren2000optimal}.\footnote{See also \cite{cartea2015book}, \cite{gueant2016book}, \cite{lehalle2018market}, \cite{donnelly2022optimal}, and \cite{webster2023handbook}.} On liquidity provision, \cite{angeris2021analysis} are the first to describe the evolution of the wealth of LPs in CPMs when rates are stochastic. Later, \cite{milionis2022automated} derive a similar loss labeled LVR for CFMs. 
Both works assume rate formation in a CEX and a trading flow which adjusts the reserves in the pool accordingly. In contrast, \cite{cartea2022Predictable} measure the predictable losses of LPs in CFMs and in CL pools with minimal assumptions on the exchange rates and on the trading flow. These losses include the opportunity cost of locking assets in the pool.\footnote{See \cite{cartea2022Predictable} for a detailed comparison.}

The remainder of this paper is organised as follows. Section \ref{sec:AMM} discusses how CFMs operate, and uses Uniswap v3 data to study the dynamics of exchange rates, liquidity, and execution costs in CPMs. Section \ref{sec:Model} solves the optimal trading problem when the depth of the pool is constant throughout the execution window, and rate formation is in an alternative CEX. Section \ref{sec:Model2} considers rate formation in the DEX, and Section \ref{sec:Model3} considers rate formation in both venues. Finally,  Section \ref{sec:Performance} showcases the performance of liquidation and statistical arbitrage strategies, and \ref{sec:apx:dataanalysis} studies LP and LT activity in Uniswap v3 and Binance.

\section{Automated market making \label{sec:AMM}}

In this section, we discuss how CFMs operate and how they differ from electronic markets where traders interact through an LOB. In particular, we describe the interactions of market participants with a CFM that is in charge of a pair of assets.  We use  transaction data from Uniswap v3 to study the activity of market participants, the dynamics of liquidity, and execution costs. 

\subsection{Description \label{sec:AMM1}}

AMMs are hard-coded and immutable programs running on a network. They provide a venue to trade pairs of assets $X$ and $Y$, where the liquidity of the pool consists of $q^X$ units of $X$ and $q^Y$ units of $Y$. The exchange rate of the pool is the price of $Y$ in terms of the price of $X$, and it is determined by the quantities $q^X$ and $q^Y$.\footnote{Some AMMs also display pools with more than two assets.} Two types of market participants interact in an AMM: LPs, who deposit their assets in the pool, and LTs, who trade directly with the pool. 
Here, we consider a CFM in charge of a single pool for the pair of assets $X$ and $Y$.  CFMs are characterised by a deterministic trading function $f(q^X,q^Y)$ that determines the rules of engagement among participants in the pool. For instance, the trading function of the CPM is $f(q^X,q^Y)=q^X\times q^Y$. 

In peer-to-peer networks, participants invoke the code of the AMM smart contract to instruct market operations. LPs send messages with instructions to deposit or withdraw liquidity, and LTs send messages to exchange one asset for the other. To provide liquidity, an LP instructs the AMM with the quantities in assets $X$ and $Y$ to be deposited in a specific pool. On the other hand, LTs indicate to the AMM the pool and the quantity of the asset to be exchanged. The available liquidity in the pool and the trading function of the AMM determine the exchange rate received by the LT. For each trade, LTs pay the AMM a transaction fee, which is distributed amongst LPs in the same proportion as their contributions to the pool.\footnote{See \cite{heimbach2021behavior} and \cite{cartea2022decentralised} for an analysis on how LPs profit from their activity.} 

\subsection{LT and LP trading conditions}

We refer to the rule that governs how LTs trade in the pool by the \textit{LT trading condition}, and to the rule that governs how LPs interact with the pool by the \textit{LP trading condition}.

\paragraph{\textbf{LT trading condition}} The trading function $f\left(q^X,q^Y\right)$ is increasing in $q^X$ and $q^Y,$ and it ties the state of the pool before and after an LT transaction is executed. Throughout, the signs of $x$ and $y$ {indicate the direction of the trade, i.e.,} if $y>0,$ the LT sells asset $Y$, and if $y<0,$ the LT buys asset $Y$. For simplicity, we assume zero fees.\footnote{To include the fee, one applies a discount to the quantity $y$ before calculations are carried out.} The LT trading condition
\begin{equation}\label{eq:trade}
    f(q^X-x,q^Y+y)=f(q^X,q^Y)=\kappa^2\ 
\end{equation}
determines the quantity $x$ that the LT receives (pays) when exchanging $y >0$ ($y <0$). The trading function keeps the quantity $\kappa^2$ constant before and after a trade is executed. We write $f(q^X,q^Y)=\kappa^2$ as $q^X=\varphi(q^Y)$ for an appropriate function $\varphi$ that depends on $\kappa$;  we refer to $\varphi$ as the \textit{level function}. The level function {of  an AMM is convex by design}.\footnote{One can show that a no-arbitrage condition leads to the necessary convexity of the level function; {see  \cite{abernethy2011optimization}, \cite{engel2021composing}, and \cite{cartea2022decentralised}}.}

We denote the (unitary) exchange rate received by the LT when trading a quantity $y$ of asset $Y$ by $Z(y)$, with units $X/Y.$\footnote{The exchange rate $Z(y)$ is the exchange rate received by the LT per unit of asset $Y$, and it is akin to the average execution price for a market order which fills resting limit orders with different price levels in an LOB.} Thus, if an LT wishes to sell a quantity $y$ of asset $Y$, she receives  $x = y \times {Z}(y)$ of asset $X$  in exchange. Therefore, 
\begin{equation*} 
    q^X - x = \varphi(q^Y+ y) \implies
    \varphi(q^Y) - y \, Z(y) = \varphi(q^Y + y) \,,
\end{equation*}
so
\begin{equation}
\label{eq:execrate}
      Z(y) = \frac{\varphi(q^Y)-\varphi(q^Y+y)}{y}\,,
\end{equation}
and  for an infinitesimal quantity $y$ we write
\begin{equation}
\label{eq:marginalrate}
    Z =-\varphi'(q^Y)\, .
\end{equation}

We refer to $Z$  as the \textit{marginal rate} of the AMM, which is equivalent to the midprice in an LOB. The marginal rate $Z$ is a reference exchange rate -- the difference between its value and the execution rate is similar to the difference between the LOB midprice and the average price obtained by a liquidity taking order that crosses the spread and walks the book when it is filled.

For CPMs, the trading function is
\begin{equation}\label{eq:CPMtradingfunction}
f(q^X,q^Y)=q^X\times q^Y=\kappa^2,
\end{equation}
so the level function is $\varphi(q^Y)= \kappa^2 / q^Y\,.$ If an LT trades $y$ when the marginal rate \eqref{eq:marginalrate} is  $Z = -\varphi'(q^Y) = (\kappa/ {q^Y})^2\,,$ then the execution rate \eqref{eq:execrate} is  $\tilde Z(y) = \frac{1}{y} \left(\frac{\kappa^2}{q^Y}-\frac{\kappa^2}{q^Y+y}\right).$

\paragraph{\textbf{LP trading condition}} The trading function $f(q^X,q^Y)$ is increasing in the pool {reserves} $q^X$ and $q^Y.$ Thus, when LP activity increases (decreases) the size of the pool, the value of $\kappa$ increases (decreases). We refer to $\kappa$ as the depth of the pool. A distinctive characteristic of AMMs is that liquidity provision changes the depth of the pool, but it does not change the marginal rate. For example, in a CPM, the marginal rate is the ratio of the quantities supplied in the pool, i.e., $Z=q^X/q^Y\, ,$ and when an LP deposits quantities $x$ and $y$ in the pool, the pair $(x,y)$ must satisfy the LP trading condition
\begin{equation}
\label{eq:LPconditionCPM}
    \frac{q^X}{q^Y} = \frac{q^X+x}{q^Y+y} = Z\, ,
\end{equation}
and the value of $\kappa$ changes from $\sqrt{q^X \times q^Y}\ $ to $\sqrt{\left(q^X+x\right)\left(q^Y + y\right)}$.  For \eqref{eq:LPconditionCPM} to hold, there exists $\rho$ such that $x = \rho \ q^X$ and $y= \rho \ q^Y,$ i.e., liquidity provision and removal by LPs in a CPM is performed in fractions of the pool \textit{reserves} $q^X$ and $q^Y,$ so the depth changes from $\kappa$ to $(1+\rho) \ \kappa.$ 

\subsection{Convexity {costs} and execution costs \label{sec:AMM2_2}}

{In this section, we study the execution costs implied by the liquidity in the pools, see e.g., \cite{engel2021presentation} and \cite{angeris2022does}.\footnote{{\cite{abernethy2011optimization} study execution costs in prediction markets where AMMs have been previously studied.}} We show that these costs are accurately estimated by the \textit{convexity costs}, which are given by the convexity of the trading function. Convexity costs depend on the size of the trade, the depth of liquidity, and the rate in the pool. In particular, they are better suited for continuous-time models because they are linear in the trading speed of the LT. }

The rate in \eqref{eq:execrate} received by an LT deteriorates as the size of the trade increases because the level function $\varphi$ is decreasing, which is {akin to electronic exchanges based on LOBs and other types of trading venues.}\footnote{The trading function $f$ is increasing in $x$ and $y$ and $\partial_y f(x,y) = \partial_y f\left(\varphi(q^Y),y\right) = 0 \text{ so } \varphi'(y) = - \frac{\partial_y f(x,y) }{\partial_x f(x,y)} < 0.$ }   The formulas \eqref{eq:execrate} and \eqref{eq:marginalrate} encode all the information needed by an LT to interact with an AMM. For a trade of size $y$,  the distance between the marginal rate $Z$ and the execution rate $ Z\left(y\right)$ defines execution costs $|y \, \left(Z -  Z(y)\right)|$ in the AMM. We define the \textit{unitary execution costs} as
\begin{equation}
\label{eq:executionCost}
    \textrm{Unitary execution costs} \ = |Z -  Z(y)|\,.
\end{equation}

To further study the characteristics of the execution costs in CPMs and motivate our framework, we use transaction data from Uniswap v3 and the traditional LOB-based exchange Binance. Uniswap v3 is considered the cornerstone of DeFi and is currently the most liquid AMM. We look at the two pairs ETH/USDC and ETH/DAI; see \ref{sec:apx:dataanalysis} for a description of the data. For the pool ETH/USDC, the unit of the depth $\kappa$ is $\sqrt{\textrm{ETH} \cdot \textrm{USDC}},$ of $q^X$ is USDC, of $q^Y$ is ETH, and the marginal rate, the execution rate, and the unitary execution costs are all in $\textrm{USDC} / \textrm{ETH}$; similarly for the pool ETH/DAI. For ease of reading,  we omit the units of $\kappa$ in the remainder of this work.

Here, we analyse the execution costs implied by the pool reserves and the trading function \eqref{eq:CPMtradingfunction}.  More precisely, we analyse the geometry of the constant product trading function to understand how unitary execution costs relate to the depth of the pool and the marginal rate $Z$. Figure \ref{fig:geometry} shows the CPM's level function $\varphi$ for $\kappa = \,$2,500,000. Each point on the curve corresponds to values for $q^X$ and $q^Y$ that result in the same pool depth. Point $O$ corresponds to the current pool quantities $q^X$ and $q^Y$. The slope of the tangent at that point gives the current marginal rate $Z = -\varphi'(q^Y).$

A change from $O$ to $A$ in Figure \ref{fig:geometry} is the result of an LT selling 2,500 ETH. A change from $O$ to $B$ is the result of an LT buying 2,500 ETH. The new rates after these transactions are given by the slopes at the new points $A$ and $B$, respectively. When an LT sells $y=$ 2,500 ETH, the unitary execution rate $Z(y) = x  / y$ is given by the slope of the line ($OA$); see \eqref{eq:execrate}. Similarly, the slope of the line ($OB$) gives the unitary rate for buying $y.$ On the other hand, the unitary execution costs are the absolute difference between the slope of the lines and the slope of the tangent at point $O;$ the magnitude of this difference depends on the curvature of $\varphi$ in the neighbourhood of $O;$ see \cite{engel2021composing} and \cite{angeris2022does}. This curvature is proportional to the convexity of the level function and can be approximated by the second-order Taylor polynomial $\tfrac12 \, \varphi ''(q^Y_O)\, y^2.$ A higher degree of convexity, i.e., more curvature around point $O,$ does not change the slope of the tangent at point $O,$ but changes the slopes of the lines $(OA)$ and $(OB).$ The convexity of the CPM's level function is given by
\begin{equation}\label{eqn: convexity varphi prime prime}
    \varphi ''(q^Y) = \frac{2\,\kappa^2}{{q^Y}^3} = \frac{2\,Z^{3 / 2}}{\kappa}\, .
\end{equation}
Therefore, the execution rate obtained for buying or selling $y$ is always less advantageous than the marginal rate $Z$ because the level function $\varphi$ is convex. 

For orders of ``small size''  one can approximate the unitary execution costs in  \eqref{eq:executionCost} with the convexity costs
\begin{equation}
    \label{eq:execcostsApproxCPM}
    | Z-  Z(y)|\approx   \frac{1}{\kappa}\,Z^{3/2}\, |y|\,,
\end{equation}
or equivalently, approximate the execution rate with
\begin{equation}    \label{eq:execratesApproxCPM}
     Z(y) \approx Z - \frac{1}{\kappa}\,Z^{3/2}\, y\,.
\end{equation}
Clearly, as the depth $\kappa$ increases {or the rate $Z$ decreases}, {the reserves in asset $Y$ in the pool increase, so} the convexity costs \eqref{eq:execcostsApproxCPM} are less pronounced. 

{\footnotesize
\begin{figure}
\footnotesize
\parbox{.45\linewidth}{
\centering
\includegraphics[width=0.53\textwidth]{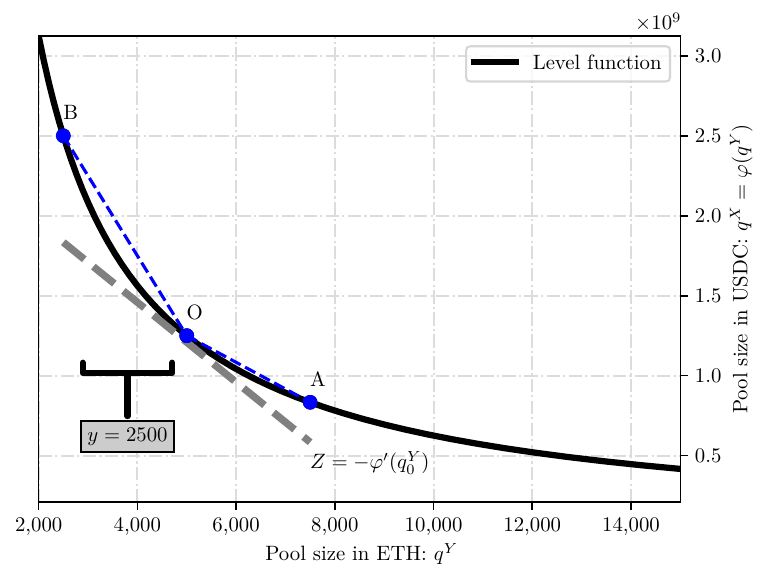}\\
\caption{Geometry of the constant product trading function. The figure shows the function $\varphi\left(y\right) = x$ where $(x,y)$ are the reserves in USDC and ETH.}
\label{fig:geometry}
}
\hfill
\parbox{.45\linewidth}{
\centering
\vspace{2.5mm}
\includegraphics[width=0.55\textwidth]{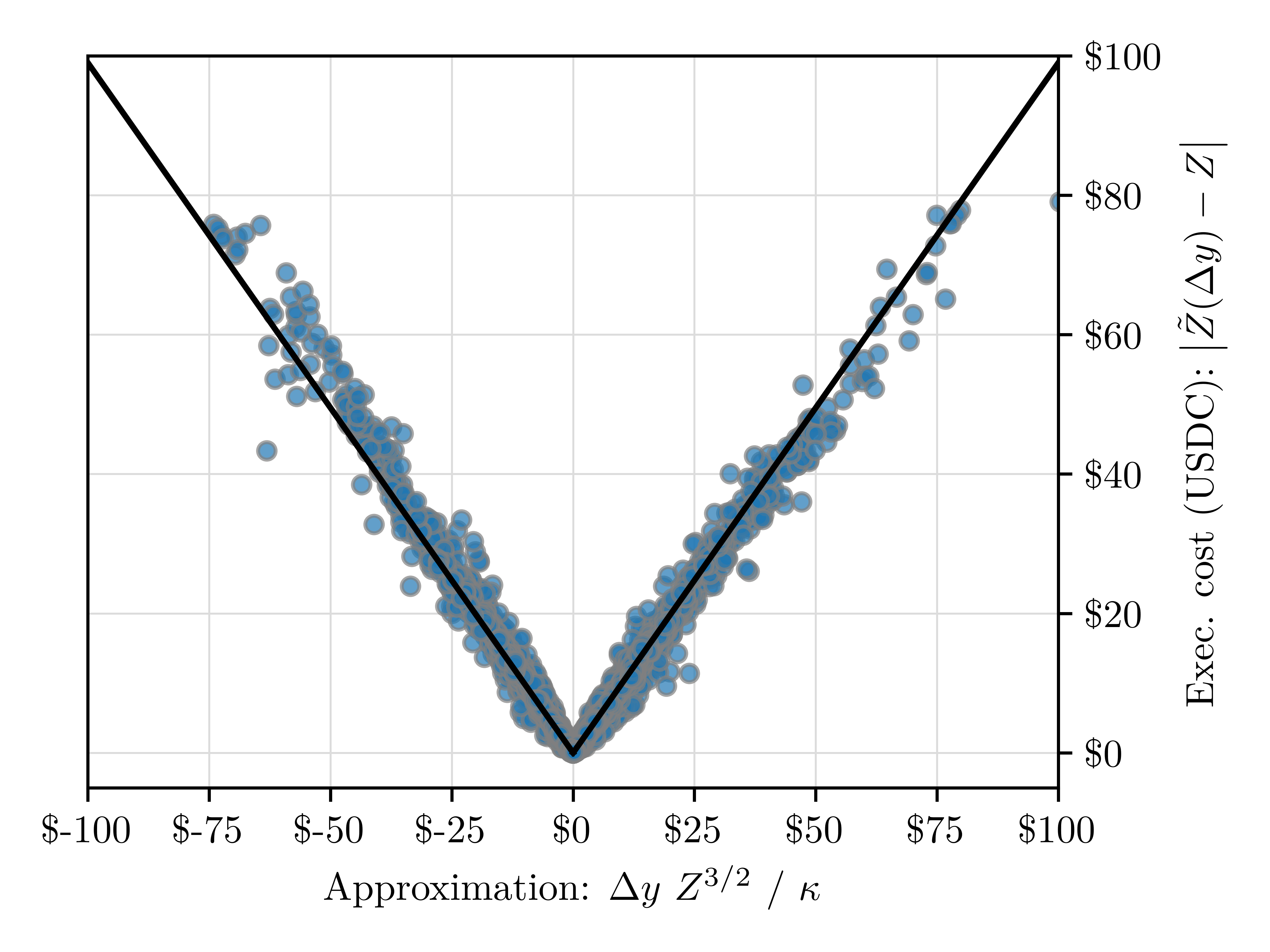}\\ 
\vspace{-1mm}
\caption{Scatter plot of transaction costs and the approximation \eqref{eq:execcostsApproxCPM}
for all transactions in the ETH/USDC pool; see \ref{sec:apx:dataanalysis}.}\label{fig:convexity1}
}
\end{figure}
}

{Next, we employ transaction data from the ETH/USDC pool, see  \ref{sec:apx:dataanalysis}, to compare the convexity costs \eqref{eq:execcostsApproxCPM} with the unitary execution costs \eqref{eq:executionCost} incurred by LTs.} Figure \ref{fig:convexity1} shows a scatter plot of the historical values of the unitary execution costs from transaction data and the approximation $\left(Z^{3/2} /  \kappa\right) y.$ Recall that negative values of $y$ are buy orders and positive values are sell orders. {The figure shows that the convexity costs in  \eqref{eq:execcostsApproxCPM} are an accurate approximation of the unitary execution costs incurred when LTs operate in a CPM.}

In  an optimal trading framework, the LT controls the speed $\nu$ at which she sends orders to the AMM. Now, assume that the LT trades the quantity $y = \nu\, \Delta t$, where  $\Delta t$ is a fixed time-step that determines the LT's frequency of trading and  $\nu$ is fixed during the time interval. The execution rate for $y$ is $Z(y) = Z - Z^{3/2}\,\nu \, \Delta t\,/\,\kappa$. Thus, in continuous time, to reflect the LT's pace of trading we write the LT's execution rate as 
\begin{align}
    \label{eq:execratesApproxCPM_nu}
    Z_t - \frac{\eta}{\kappa}\,Z^{3/2}_t\,\nu_t \, \,,
\end{align}
where the parameter $\eta$ scales the convexity costs according to the LT's trading frequency.

Due to the convexity of the level function, it is sub-optimal to execute large orders in one trade. An optimal trading framework, similar to those developed for traditional LOB-based markets, should balance the trade-off between execution costs and rate risk. The execution cost in CFMs is similar to the cost of ``walking the book'' when trading in LOBs, sometimes referred to as the temporary price impact. The difference between the temporary price impact and the CFM execution cost is that in the CFM we have a deterministic closed-form expression for the execution cost as a function of the depth and the rate, both of which are estimated by LTs to estimate execution costs.   On the other hand, in LOBs, traders usually rely on historical data analysis and assumptions to obtain an estimate of the execution costs. In LOBs, it is generally assumed that temporary price impact is a linear function of the speed of trading where the slope of the function is assumed to be fixed; see \cite{cartea2015book} and \cite{gueant2016book}.

\ref{sec:apx:dataanalysis} investigates trading activity in the pools we consider and compares the dynamics of the exchange rates in Uniswap v3 and Binance. {Our analysis confirms the widely accepted view that despite high levels of activity in current CPMs, exchange rate formation primarily occurs in electronic markets based on LOBs; see \cite{angeris2021analysis}.} In what follows, we call the leading exchange rate from another trading venue the \textit{oracle}, which in our case is the Binance quoted rate. Therefore, in Section \ref{sec:Model}, we propose our first trading model in CPMs where the LT acknowledges that rate formation is not in the pool, so she employs the oracle rate to inform her trading decisions. Moreover, our empirical analysis shows that LP activity is significantly lower than that of LTs, so our model assumes a fixed value for the pool depth $\kappa$ throughout the LT's trading window.

We use the tools of stochastic optimal control to solve the optimisation problem of the LT. The functional form of the convexity costs leads to a semilinear PDE and one can compute the optimal trading speed with a numerical scheme. However, in our case, the numerical scheme is computationally expensive because the semilinear PDE requires a thin grid and a linearisation iterative method to transform the nonlinear problem into a sequence of linear problems. {Thus, to reduce computational costs}, Subsection \ref{sec:cfapprox} introduces a method that uses piecewise constant convexity costs to obtain, {in the limit,} a closed-form approximation strategy {that adjusts the trading speed of the LT according to the stochastic convexity costs in the pool}. We employ this strategy in our performance analysis of Section \ref{sec:Performance}.

{In practice, when  the activity in AMMs is high enough so that rate discrepancies  seldom appear, or when the AMM is the sole trading venue for a pair of assets -- such as many crypto-assets exclusively traded on Uniswap v3 -- our first model is not suitable. Thus, Section \ref{sec:Model2} proposes a model in which exchange rates  form only in the AMM, and where the depth of the pool is stochastic. Later, Section \ref{sec:Model3} proposes a model where both the AMM and the competing CEX are active, so exchange rates form in both markets.

}

\section{Optimal trading {when rates form in the CEX \label{sec:Model}}}

An LT trades in a CPM to exchange a large position in asset $Y$ into asset $X$ or to execute a statistical arbitrage in the pair. In both cases, the LT uses rate information from the pool in the CPM and from another more liquid exchange in which the oracle rate $S$ is the price of $Y$ in terms of that of $X$.  The depth $\kappa$ of the pool is  constant during the execution window $[0,T]$, where $T>0$\,.

The LT must liquidate a position ${y}_0$ in asset $Y$ over the period of time $[0,T]$, and her wealth is valued in terms of asset $X$. The LT trades at speed $(\nu_t)_{t\ge 0}$, so her inventory $(y_t)_{t\ge 0}$ evolves as
\begin{equation} \label{eq:ytildeProcess_modelI}
    d y_t = -\nu_t \,dt\, ,
\end{equation}
where, for simplicity, trading fees are zero.\footnote{{Our performance analysis of Section \ref{sec:Performance} considers  {pool fees and gas fees }to compute the performance of the strategies.}} We do not restrict the speed in \eqref{eq:ytildeProcess_modelI} to be positive; if $\nu>0$, the LT sells the asset, and if $\nu<0$, the LT buys the asset. When the initial inventory is ${y}_0>0$ (resp. ${y}_0=0$) the LT executes a liquidation (resp. speculation) programme. 
{Next, we describe the dynamics of the AMM and oracle rates, and we motivate our modelling assumptions. 

\subsection{Oracle and AMM rate dynamics\label{sec:model:dynamics}}

{In this section, we assume that the oracle rate is the most efficient price, i.e., it incorporates all available information. In practice, for most asset pairs quoted on both a CEX and a DEX, price formation occurs either primarily on the CEX or jointly on both venues. In particular, prior to \textit{The Merge}, when Ethereum transitioned from a Proof-of-Work to a Proof-of-Stake consensus mechanism, high gas fees made small trades economically unviable, so price formation occurred mainly on CEXs. Following \textit{The Merge}, price formation may occur in both markets, in which case the model of Section $5$ is more appropriate.}

The dynamics of the unaffected oracle exchange rate process $(S_t)_{t\ge 0}$  are 
\begin{equation} \label{eq:SProcess}
    dS_t = \sigma\, S_t\, dW_t\,, 
\end{equation} 
where the volatility parameter $\sigma$ is a nonnegative constant, and $(W_t)_{t\ge 0}$ is a standard Brownian motion. On the other hand,  the unaffected marginal rate in the AMM is denoted by $(Z_t)_{t\ge 0}$.  {In the absence of a CEX where prices form, the oracle $S$ is interpreted as a fundamental value observed by the agent. In this case, the main economic implications of our model remain valid.}

In the absence of market frictions, continuous arbitrage between the oracle rate $S$ and the marginal rate $Z$ would make exchange rates converge so that $S_t = Z_t$ at any time $t$. However, exchanges and AMMs are not frictionless, and a portion of the trading flow is not informed, so we consider the following dynamics for the AMM's unaffected marginal rate:
\begin{equation}
\label{eq:ZProcess}
dZ_{t}=\beta\,\left(S_{t}-Z_{t}\right)\,dt+\gamma\,Z_{t}\,dB_{t}\, ,
\end{equation}
where $\beta > 0$ is the oracle-reverting parameter that quantifies the strength of the arbitrage trading flow, $\gamma > 0$ quantifies the dispersion induced by the noise trading flow, and $(B_t)_{t\ge 0}$ is a standard Brownian motion independent of $(W_t)_{t\ge 0}.$\footnote{Similar dynamics to those in  \eqref{eq:ZProcess} are explored in the literature on optimal execution in CEXs that uses market signals to improve performance of strategies; see  \cite{cartea2016incorporating}, \cite{BechlerLudkovski},  \cite{cartea2018enhancing}, \cite{lehalle2019incorporating},  \cite{neuman2020optimal}, \cite{forde2022optimal},  \cite{belak2018optimal}, and \cite{cartea2018trading}.}

{The parameter $\beta$ captures the speed and efficiency with which arbitrageurs align prices across both markets. Specifically, when liquidity is costly in either the CEX or the DEX, the value of $\beta$ tends to be lower. Figure~\ref{fig:stats dynamics} shows estimates of $\beta$ when the CEX is Binance and the DEX is Uniswap v3, for the liquid pair ETH/USDC and the less liquid pair ETH/DAI. As expected, the magnitude of $\beta$ is significantly higher for the more liquid ETH/USDC pair, reflecting more active arbitrage and deeper liquidity in both the CEX and the DEX, which reduces trading frictions for arbitrageurs.}

\paragraph{\textbf{Rate impact of the LT's trades}} Our model considers the execution of large trades in an AMM. Akin to the literature on the execution of large orders in CEXs, we assume that the LT's trading activity has both temporary and permanent impact on the pool rates. Temporary impact is given by the convexity costs  and permanent impact is linear in the LT's speed of trading; see \cite{gatheral2013dynamical}.   Specifically, the pool quoted rate process is $(\tilde Z_t)_{t\ge 0}$  and the oracle market rate is $(\tilde S_t)_{t\ge 0}$ with dynamics
\begin{equation}\label{eq:market rates}
d\tilde{Z}_{t}=dZ_{t}-c\,\nu_t\,dt\,, \qquad d\tilde{S}_{t}=dS_{t}-c\,\nu_t\,dt\,, 
\end{equation}
where $\tilde Z_0 = Z_0$, $ \tilde S_0 = S_0\,,$ and  $c\ge 0$ is the permanent impact parameter. The dynamics of $\tilde Z$ in \eqref{eq:market rates} imply that 
the trading activity of the LT affects the marginal rate and the level to which it reverts. More precisely, write
\begin{align*}
    d\tilde Z_t = dZ_{t}-c\,\nu_{t}\,dt &= \beta\,\left(S_{t}+c\,\left(y_{t}-y_{0}\right)-\tilde{Z}_{t}\right)dt+\gamma dB_{t}-c\,\nu_{t}\,dt\\
    & = \beta\left(\tilde{S}_{t}-\tilde{Z}_{t}\right)dt+\gamma\,dB_{t}-c\,\nu_{t}\,dt\,.
\end{align*}
Thus, the pool's quoted rate $\tilde Z$ reverts to the oracle market rate $\tilde S$.  

{In our model, permanent price impact scales linearly with the agent's trading speed. This specification reflects how net buying (selling) pressure over the trading window induces a permanent upward (downward) price shift in the AMM due to the deterministic pricing rules of the AMM described above. This shift is linked to the depth parameter $\kappa$, which governs how price responds to volume. In this case, our assumption of instantaneous permanent market impact is a tractable approximation that smooths the total price impact over time.}

\paragraph{\textbf{Suitability of rate dynamics}} To assess the suitability of the exchange rate dynamics in \eqref{eq:SProcess} and \eqref{eq:ZProcess}, 
Figure \ref{fig:stats dynamics} shows the results of a CEX/DEX Granger-causality test for different data sampling frequencies.  The results of the  test  indicate that Binance rates lead the rates in the AMM. This is not by design, it is a consequence of the higher liquidity in Binance. Thus, currently, and similar to the dynamics in \eqref{eq:ZProcess}, it is crucial to consider the rate from a more liquid venue when trading in an AMM. Moreover, our analysis in \ref{sec:apx:dataanalysis} shows that there is more trading activity in Binance than there is LT and LP trading activity in the AMM, as measured by the frequency of instructions and the size of the orders. In particular, during periods with little trading activity in the pools,  the oracle rate plays a central role to attract LT  activity in the AMM whenever the difference between the marginal rate and the oracle rate is significant; recall that only liquidity taking trades can change  the rate of the pool. The widening of the difference between the two rates triggers LT activity which drives the two exchange rates to converge; i.e., arbitrageurs keep markets in check.

\begin{figure}%
    \centering
    \subfloat[ETH/DAI $0.3\%$.] {{\includegraphics[width=0.45\textwidth]{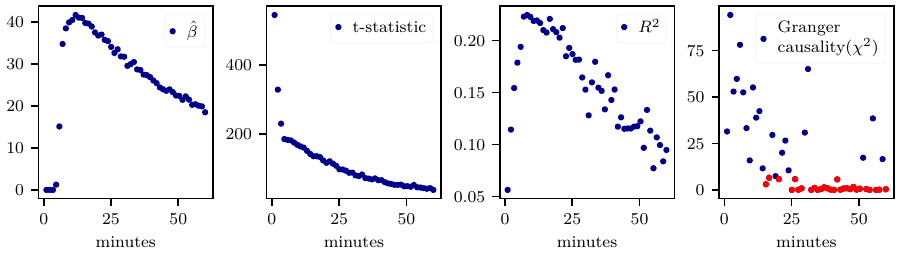} }}%
    \qquad
    \subfloat[ETH/USDC $0.05\%$.] {{\includegraphics[width=0.45\textwidth]{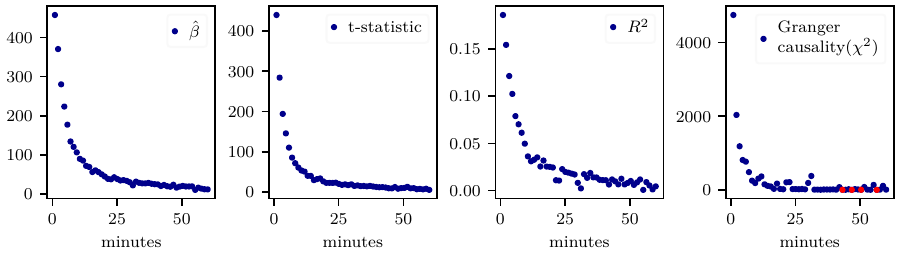} }}%
    \caption{{For each pair of assets and for values of the sampling frequency $\Delta t \in [14 \text{ seconds}, 60 \text{ minutes}]$, the first panel shows the estimated oracle-reversion parameter $\beta$ in \eqref{eq:ZProcess}, the second panel shows the associated $t$-statistic, the third panel shows the R-squared of the regression \eqref{eq: discrete rate dyn 0}, and the last panel shows the Granger causality test statistic $\chi^2$. Red dots in the last panel correspond to rejection, at the $0.01\%$ confidence level, of the hypothesis that CEX rates Granger-cause DEX rates.  }}
    \label{fig:stats dynamics}%
\end{figure}

The dynamics in \eqref{eq:ZProcess} consider a continuous oracle-reversion behaviour of the rate in the AMM. However, for some pairs of assets for which rates form in CEXs, \cite{milionis2023effect} and \cite{he2024liquidity} show that there are no-trade intervals, induced by execution costs in the DEX, within which the DEX rate does not change despite changes in the CEX rate.  To study the suitability of the marginal rate dynamics in these cases, we discretise \eqref{eq:ZProcess} as
\begin{equation}
\label{eq: discrete rate dyn 0}
\Delta \log Z_t  = -\frac{\gamma^2}{2} \, \Delta t + \beta\, \left(\frac{S_t - Z_t}{Z_t}\right) \,\Delta t + \gamma \, \sqrt{\Delta t} \, \epsilon_t\,,
\end{equation}
where  $\epsilon_t$ is an error term, and we employ data for the two pairs ETH/USDC and ETH/DAI to run an ordinary least squares (OLS) regression to estimate the parameters in \eqref{eq:ZProcess}.\footnote{Binance and Uniswap data for the two pairs ETH/USDC and ETH/DAI are described in \ref{sec:apx:dataanalysis}.}

Figure \ref{fig:stats dynamics} shows the estimation results for multiple values of the sampling frequency $\Delta t$.  The figure shows that the descriptive power of the dynamics \eqref{eq:ZProcess} for the marginal rate in the liquid pool ETH/USDC improves as the  frequency of data sampling increases -- this improvement is primarily due to the presence of noise trading and arbitrage trading activity at high frequency.  Thus,  the impact of no-trade zones in liquid pools is negligible. However, for the illiquid pool ETH/DAI and at high sampling frequencies ($\le 5$ minutes) of the data, Figure \ref{fig:stats dynamics} shows that model \eqref{eq:ZProcess} does not describe the marginal rate accurately due to intervals with no trades. However, when data are sampled at low frequencies ($\ge 5$ minutes), the impact of these no-trade intervals diminishes. As the sampling frequency decreases, this reduction is attributed to larger movements in both rates due to more noise trading and arbitrage trading. Thus, optimal trading based on the rate dynamics in  \eqref{eq:ZProcess} is suited for active pools at any observation frequency, and for illiquid pools when the observation frequency is low.

{Finally, the sampling frequency that yields the highest value of $\beta$ in Figure~\ref{fig:stats dynamics} corresponds to the frequency with which arbitrageurs align prices across markets for a given pair. This frequency corresponds to the timescale at which observed CEX/DEX price differences exhibit maximal mean-reversion behavior, leading to higher estimates of the parameter $\beta$. For the liquid pair ETH/USDC in Figure~\ref{fig:stats dynamics}, arbitrage activity occurs at the highest sampling frequency, namely, block creation frequency. In contrast, for the less liquid pair ETH/DAI, arbitrage activity occurs less frequently, approximately every $10$ minutes. More generally, when CEX/DEX prices are sampled at a lower frequency than that of the underlying arbitrage activity, the observed spread between prices appears less mean-reverting, leading to a lower estimated value of $\beta$.}

}

\subsection{Model setup\label{sec:model:optimal}}

As discussed above, when an LT sends a trade to the AMM, the rate impact received by the order is encoded in the trading function. As in traditional models for optimal execution, the  impact depends on the trading speed. In AMMs, rate impact also depends on the rate $\tilde Z$ and  the depth $\kappa$. Specifically, the difference between the execution rate and the marginal rate $\tilde Z$ is as in \eqref{eq:execratesApproxCPM_nu}, {i.e., the convexity costs to trade at speed $\nu$ are $\frac{\eta}{\kappa}\, \tilde Z^{3/2}\,\nu\,.$} The LT trades at speed $\nu$, so the quantity of $Y$ swapped at every instant in time is given by $\nu\,dt$, and the dynamics of her holdings in asset $X$ are given by
\begin{equation}\label{eq:cash}
    d \tilde x_t =\left(\tilde Z_t-\frac{\eta}{\kappa} \, \tilde Z_{t}^{3/2} \,\nu_{t}\right)\,\nu_t\,dt\, .
\end{equation}

The cost term in \eqref{eq:cash} is the rate impact function of the AMM, which is determined by the convexity of the level function $\varphi$ that depends on the {quoted} marginal rate $\tilde Z_t$ of the pool at the time the liquidity taking order is executed, see \eqref{eq:execcostsApproxCPM}. The key difference between the functional form of the convexity costs in \eqref{eq:cash} and those in the equity LOB literature is that in general, the price impact functions proposed for LOB models do not depend on the price of the asset, see \cite{cartea2015book}.  Thus, the convexity costs are stochastic and their dynamics are known.\footnote{{Existing work on dynamic market impact models for CEXs is in  \cite{almgren2012optimal}, \cite{gatheral2013dynamical}, \cite{cheridito2014optimal}, \cite{barger2019optimal}, and \cite{fouque2021optimal}. In particular, \cite{graewe2018smooth} considers price-dependent impact functions.}}

The LT trades at speed $\nu$ to maximise her expected terminal wealth in units of $X$ while penalising inventory in $Y$. Specifically, her performance criterion is given by
\begin{align}
\label{eq:perfcriteria_modelI}
    u^{\nu}\left(t,\tilde x,y,\tilde Z,\tilde S\right)=\E_{t,\tilde x,y,\tilde Z,\tilde S}\left[\tilde x_{T}^{\nu}+y_{T}^{\nu}\,\tilde Z_T-\tilde \alpha\,\left(y_{T}^{\nu}\right)^{2}-\phi\,\int_{t}^{T}\left(y_{s}^{\nu}\right)^{2}\,ds\right]\,.
\end{align}
The first term on the right-hand side of \eqref{eq:perfcriteria_modelI} represents the LT's holdings in asset $X$ at the end of the trading window,  {the second term represents the LT's earnings from liquidating her remaining inventory $y_T$ at the terminal time $T$ at rate $\tilde Z_T$, and the third term is the `cost' of liquidating this final inventory. In particular, in the limit $\alpha\rightarrow\infty,$ the optimal strategy guarantees full liquidation. The terminal penalty $\tilde \alpha$ may incorporate a non-financial component to refine the search for the optimal strategy, which reflects the LT's preference for achieving complete liquidation.} 
Finally, the last term on the right-hand side of \eqref{eq:perfcriteria_modelI} represents a running inventory penalty, {which does not affect the wealth of the LT.} Here, the parameter $\phi \geq 0$ quantifies the urgency of the LT to liquidate inventory. The units of $\phi$ and $\alpha$ are such that the penalty terms are in units of $X$. 

{In our model, the liquidity depth $\kappa$ in the AMM is taken as given by the liquidity taker. While liquidity provision is passive in the current setup, this does not preclude it from being strategic. For example, in a two-stage model, where liquidity providers choose the pool’s depth in the first stage based on expectations about future trading volumes of LTs in the second stage, $\kappa$ would be determined strategically in the first stage. Liquidity takers would then take this depth as given when optimising their trading decisions. In such a framework, the results of our model would still apply in the second stage.
}

In \ref{apx:originalmodel}, we use stochastic optimal control to study the optimisation problem in \eqref{eq:perfcriteria_modelI}. The functional form of the convexity costs leads to {the semilinear PDE \eqref{eq:ANX:hjbtheta2}}  which we cannot  solve in closed form. The optimal trading speed in feedback form is a function of the solution to the semilinear PDE, so one must compute the optimal trading speed using a numerical scheme, which in our case is  computationally expensive.\footnote{{The numerical scheme uses a {four}-dimensional grid and requires iterative methods to linearise the problem. }}   In practice, the profitability of execution and statistical arbitrage strategies relies on computing the strategy and instructing the AMM within very short periods of time.   {Thus, in Subsection \ref{sec:cfapprox},  we introduce a method that uses piecewise constant convexity costs to obtain, in the limit when the number of piecewise constant intervals increases to infinity, a continuous closed-form approximation strategy that accounts for stochastic convexity costs and that can be deployed by the LT in real time.} Finally, to assess the accuracy of the closed-form approximation strategy, Subsection \ref{sec:model:comparison} compares it with the optimal strategy {obtained by numerical solving of the semilinear PDE  \eqref{eq:ANX:hjbtheta2}. }

\subsection{Closed-form approximation strategy \label{sec:cfapprox}}

\paragraph{\textbf{Constant impact parameter strategy}} To obtain a trading strategy that can be implemented by the  LT in real time, we first derive a strategy where  convexity costs are deterministic. We use this  strategy as the building block for the LT's closed-form approximation strategy we derive below. Accordingly, we write the convexity cost $\eta\,Z^{3/2}\,\nu/\kappa$ in \eqref{eq:cash} as    $- \eta\,\zeta\,\nu\,,$ where $\zeta > 0$ is the impact parameter and recall that the value of $\eta$ depends on the LT's trading frequency. With fixed executions costs,  the LT can derive a closed-form optimal trading strategy $(\nu^{\star,\zeta}_t)_{t\ge 0}$ for a given value of the parameter $\zeta$.

Here, the LT trades at speed $(\nu_t^{\zeta})_{t\ge 0}$, so the inventory $(y_t^{\zeta})_{t\ge 0}$ evolves as in \eqref{eq:ytildeProcess_modelI} and the dynamics of the LT's holdings $(\tilde x_t^{\zeta})_{t\ge 0}$ in asset $X$ are 
\begin{equation}\label{eq:impact_deterministic}
    d \tilde x^\zeta_t =\left(\tilde Z_t-\eta \, \zeta \,\nu^\zeta_{t}\right)\,\nu^\zeta_t\,dt\, .
\end{equation}

The performance criterion of the LT, who trades at speed $\nu^{\zeta}$\,, is given by
\begin{equation}\label{eq:perfcriteria0}
    \omega^{\zeta}\left(t,\tilde x,y,\tilde Z,\tilde S\right)=\E_{t,\tilde x,y,\tilde Z,\tilde S}\left[\tilde x_{T}^{\zeta}+ y_{T}^{\zeta}\,\tilde Z_T-\tilde\alpha\,\left(y_{T}^{\zeta} \right)^{2}-\phi\,\int_{t}^{T}\left(y_{s}^{\zeta}\right)^{2}\,ds\right]\,.
\end{equation}

{
In the remainder of this work, we make the  assumption that the terminal penalty $\tilde\alpha$ is larger than half the permanent rate impact.
\begin{assume}\label{ass:alpha large}
$\alpha = \tilde\alpha - c/2 \ge 0$.
\end{assume}
In practice,  Assumption \ref{ass:alpha large} is not restrictive because the value of the permanent rate impact is typically small and agents choose arbitrarily large values of $\tilde \alpha$ to enforce liquidation.  The next result demonstrates that when Assumption \ref{ass:alpha large} holds, the solution to the optimal execution problem \eqref{eq:perfcriteria0} becomes independent of the permanent rate impact; see \ref{apx:proofprop1} for a proof. 
\begin{prop}\label{prop:perm impact is useless}
Let Assumption \ref{ass:alpha large} hold. Problem \eqref{eq:perfcriteria0} is equivalent to 
\begin{align}
\label{eq:perfcriteria}
    u^{\zeta}(t,x,y,Z,S)=\E_{t,x,y,Z,S}\left[x_{T}^{\zeta}+y_{T}^{\zeta}\,Z_T-\alpha\,\left(y_{T}^{\zeta}\right)^{2}-\phi\,\int_{t}^{T}\left(y_{s}^{\zeta}\right)^{2}\,ds\right]\ ,
\end{align}
where $\alpha=\tilde \alpha - c/2$ and the unaffected cash process $(x_t^{\zeta})_{t\ge 0}$ satisfies
\begin{equation}\label{eq:unaffected cash dynamics}
dx^{\zeta}_t = \left(Z_t - \eta\,\zeta\,\nu^\zeta_t\right)\,\nu^\zeta_t\,dt\,, \quad x_0^\zeta = \tilde{x}_0^\zeta\,.
\end{equation}
\end{prop}

}

In {\ref{apx:standard model}, we follow similar steps as in the classical models of \cite{cartea2015book} and \cite{gueant2016book} to derive the optimal strategy of  problem  \eqref{eq:perfcriteria}.}

\paragraph{\textbf{The closed-form approximation strategy}} Here,  we use a family of closed-form strategies of the type \eqref{eq:perfcriteria} to derive a piecewise-defined trading strategy which approximates the optimal trading speed that maximises the performance criterion \eqref{eq:perfcriteria_modelI}. Specifically, we partition the space of the rate $Z$ into strips and define a piecewise strategy which uses a different impact parameter $\zeta$ in each different strip. Finally, we show that as the width of the strip becomes arbitrarily small, the piecewise strategy converges to the closed-form approximation strategy. 

Let $\left\{Z_0, \dots,Z_{N}\right\}$ be a partition of $\left[\underline{Z},\overline{Z}\right],$ where $0<\underline{Z}<\overline{Z}$ so that for each $N\in\N$ and  $j\in\{0,\dots,N\}$ we define
\begin{equation}\label{eq:partitionedConvexity}
    Z_{j}^N \coloneqq \underline{Z}+\frac{j}{N} \left(\overline{Z}-\underline{Z}\right)\quad\text{and}\quad\zeta_{j}^{N} =  \frac{1}{\kappa}\, \left(Z_{j}^N\right)^{3/2}\,.
\end{equation}
In the remainder of this section, $\nu^{\star,j,N}$ denotes the optimal trading strategy with impact parameter $\zeta_{j}^{N}$. Note that whenever $Z$ is arbitrarily close to $Z_{j}^N$, the impact parameter  $Z^{3/2}/\kappa$ in \eqref{eq:cash} can be approximated by $\zeta_{j}^{N}$. Thus, to construct the approximate trading strategy, we first define a strategy $\nu^*_N$ that uses the closed-form optimal trading speed $\nu^{\star,j,N}$ to approximate the optimal trading speed whenever the rate is close to $\zeta_{j}^{N}$. We define the piecewise-defined trading speed  $\nu^*_N$ 
\begin{equation}\label{eq:piecewise_strategy}
    \begin{split}
        \nu^{\star,N}\left(t,{y},Z,S\right) =\,& \nu^{\star,0,N}\left(t,{y},Z,S\right)\mathbbm{1}_{Z<Z^N_1} + \sum_{j=1}^{N-1}\nu^{\star,j,N}\left(t,{y},Z,S\right)\mathbbm{1}_{Z\in[Z_j^N,Z^N_{j+1})} \\
        &+ \nu^{\star,N,N}\left(t,{y},Z,S\right)\mathbbm{1}_{Z\geq Z^N_N}\,.
    \end{split}
\end{equation}
The strategy $\nu^{\star,N}\left(t,{y},Z,S\right)$ has first-type discontinuity points; specifically, it is discontinuous over $[0,T]\times\R\times\{Z_j^N\}\times\R\,$ for each $j\in\{1,\dots,N\}$ because for each $\left(t,{y},Z^N_{j+1},S\right)\in[0,T]\times\R^2$ we have  $\nu^{\star,j,N}\left(t,{y},Z^N_{j+1},S\right)\neq\nu^{\star,j+1,N}\left(t,{y},Z^N_{j+1},S\right)$.

The theorem below shows how to partition  $\left[\underline{Z},\overline{Z}\right]$ to make the size of the discontinuities in $\nu^{\star,N}\left(t,{y},Z,S\right)$ arbitrarily small. Furthermore, when the distance between points in the partition becomes sufficiently small, the sequence of piecewise-defined optimal strategies $\{\nu^{\star,N}\}_{N\in\N}$ converges uniformly to a continuous closed-form approximation strategy which we use in our performance study of Section \ref{sec:Performance}. 

\begin{thm}\label{thm}
For each $\varepsilon>0\,$ there exists $N\in\N$ such that
\begin{equation}\label{eq:inequality}
    \max_{j=1,\dots,N}\left|\nu^{\star,j,N}\left(t,{y},Z^N_{j+1},S\right)-\nu^{\star,j+1,N}\left(t,{y},Z^N_{j+1},S\right)\right|<\varepsilon\,.
\end{equation}
Furthermore, for each $N\in\N\,,$ let $\hat{\nu}^{\star,N}\coloneqq \nu^{\star,N}|_{[0,T]\times\R\times[\underline{Z},\overline{Z}]\times\R}$. Then, the sequence $\{\hat{\nu}^{\star,N}\}$ converges to $\hat{\nu}^{\star}$ uniformly in $ [0,T]\times\R\times\left[\underline{Z},\overline{Z}\right]\times\R $\,, where
\begin{equation}
\label{eq:pseudooptimal}
    \hat{\nu}^{\star}\left(t,{y},Z,S\right)=-\frac{\kappa}{\eta}\,Z^{-3/2} \,A(t,Z)\, y+\frac{\kappa}{2\,\eta}\,Z^{-3/2}\,B(t,Z)\,(S-Z)\,,
\end{equation}
and
\begin{equation}\label{eq:pseudoAB}
    \begin{split}
        A(t,Z) =&\, \sqrt{\frac{\phi\,\eta\,Z^{3/2}}{\kappa}}\tanh\left(\frac{\sqrt{\phi\,\kappa}}{\sqrt{\eta\,Z^{3/2}}}\,t+\arctanh\left(-\frac{\alpha\,\sqrt{\kappa}}{\sqrt{\phi\,\eta\,Z^{3/2}}}\right)\right)\ ,\\
        B(t,Z) =& \,\int_{t}^{T}\beta \exp\left(\int_{s}^{t} \left(\beta-\frac{\kappa}{\eta\,Z^{3/2}}A(u,Z)\right)du\right)ds\,.\\
    \end{split}
\end{equation}
\end{thm}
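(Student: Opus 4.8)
The plan is to exploit that every strategy in play is affine in the pair $(\tilde y,\,S-Z)$, with coefficients that depend on the state only through the impact parameter $\zeta$. From \eqref{eq:optimalspeed_feedback_j2}, for each $\zeta>0$ I write
\begin{equation}
\nu^{\zeta,\star}(t,\tilde y,Z,S)=-a(\zeta,t)\,\tilde y+b(\zeta,t)\,(S-Z),\qquad a(\zeta,t)\coloneqq\frac{A_\zeta(t)}{\eta\,\zeta},\quad b(\zeta,t)\coloneqq\frac{B_\zeta(t)}{2\,\eta\,\zeta}.
\end{equation}
Setting $\psi(Z)\coloneqq Z^{3/2}/\kappa$, so that $\zeta_j^N=\psi(Z_j^N)$ by \eqref{eq:partitionedConvexity}, the piecewise strategy $\nu^{\star,N}$ uses $a(\psi(Z_j^N),t)$ and $b(\psi(Z_j^N),t)$ on the strip containing $Z$, while the candidate limit \eqref{eq:pseudooptimal} reads $\tilde\nu^\star=-a(\psi(Z),t)\,\tilde y+b(\psi(Z),t)\,(S-Z)$; the identities $A(t,Z)=A_{\psi(Z)}(t)$ and $B(t,Z)=B_{\psi(Z)}(t)$ follow by inserting $\zeta=\psi(Z)$ into \eqref{eq:solution_system_ode_conve} and matching the prefactors $\sqrt{\phi\,\eta\,\zeta}=\sqrt{\phi\,\eta\,Z^{3/2}/\kappa}$ and $1/(\eta\,\zeta)=\kappa\,Z^{-3/2}/\eta$. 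Both assertions therefore reduce to controlling the increments of the scalar maps $\zeta\mapsto a(\zeta,t)$ and $\zeta\mapsto b(\zeta,t)$, uniformly in $t\in[0,T]$.

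The two analytic inputs I would establish first are: (i) $\psi$ is Lipschitz on $[\underline Z,\overline Z]$, so $\max_j|\zeta_{j+1}^N-\zeta_j^N|\le L_\psi\,(\overline Z-\underline Z)/N\to0$ and, for $Z$ in the strip with left endpoint $Z_j^N$, $|\psi(Z)-\psi(Z_j^N)|\le L_\psi\,(\overline Z-\underline Z)/N$; and (ii) $a$ and $b$ are uniformly continuous on the compact box $\mathcal K\coloneqq[\psi(\underline Z),\psi(\overline Z)]\times[0,T]$, on which $\zeta$ is bounded away from $0$ because $\underline Z>0$. Given these, Part 1 is immediate: the jump at $Z_{j+1}^N$ satisfies
\begin{equation}
\big|\nu^{\star,j,N}-\nu^{\star,j+1,N}\big|\big(t,\tilde y,Z_{j+1}^N,S\big)\le\big|a(\zeta_{j+1}^N,t)-a(\zeta_j^N,t)\big|\,|\tilde y|+\big|b(\zeta_{j+1}^N,t)-b(\zeta_j^N,t)\big|\,|S-Z_{j+1}^N|,
\end{equation}
and uniform continuity of $a,b$ together with $\max_j|\zeta_{j+1}^N-\zeta_j^N|\to0$ drives both coefficient increments below $\varepsilon$ for $N$ large (for $\tilde y,S$ in any fixed bounded range). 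Part 2 is the same estimate with $\zeta_j^N$ compared to $\psi(Z)$ rather than to $\zeta_{j+1}^N$: on the strip containing $Z$,
\begin{equation}
\big|\tilde\nu^{\star,N}-\tilde\nu^\star\big|\le\big|a(\psi(Z_j^N),t)-a(\psi(Z),t)\big|\,|\tilde y|+\big|b(\psi(Z_j^N),t)-b(\psi(Z),t)\big|\,|S-Z|,
\end{equation}
and since $|\psi(Z)-\psi(Z_j^N)|\to0$ uniformly in $(Z,j)$, uniform continuity of $a,b$ forces the coefficient errors to $0$ uniformly in $(t,Z)\in[0,T]\times[\underline Z,\overline Z]$, which is the claimed convergence.

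The main obstacle is input (ii): the joint continuity of $a$ and $b$ on $\mathcal K$. For $a$ this comes from the closed form of $A_\zeta$ in \eqref{eq:solution_system_ode_conve}: $\tanh$ is smooth and bounded, the prefactor $\sqrt{\phi\,\eta\,\zeta}$ and the inner coefficients are smooth in $\zeta$ on $\{\zeta\ge\psi(\underline Z)>0\}$, and the factor $1/(\eta\,\zeta)$ is smooth there as well; the only point needing care is that the $\arctanh$ argument $-\alpha/\sqrt{\phi\,\eta\,\zeta}$ stays in $(-1,1)$, i.e.\ that the Riccati solution does not leave the regime in which the $\tanh$ representation is valid, which holds on the relevant parameter range under the standing positivity assumptions. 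For $b$, $B_\zeta(t)$ is an integral over a compact interval of an integrand built continuously from $A_\zeta$, so continuity in $\zeta$ follows from continuity of parameter-dependent integrals (dominated convergence, using a uniform bound on $A_\zeta$ over $\mathcal K$); dividing by the smooth nonvanishing factor $2\,\eta\,\zeta$ preserves continuity. Continuity on the compact $\mathcal K$ then upgrades to uniform continuity, closing both parts. As bookkeeping, the max in \eqref{eq:inequality} is over the interior transitions $j=1,\dots,N-1$ (the boundary pieces $\nu^{\star,0,N}$ and $\nu^{\star,N,N}$ are handled identically), and the convergence statements are uniform in $(t,Z)$ at the level of the affine coefficients, hence uniform on the stated domain once $\tilde y$ and $S$ are restricted to bounded sets.
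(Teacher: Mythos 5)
Your proposal is correct and follows essentially the same route as the paper's proof: both reduce the jump estimate and the convergence estimate to increments of the affine coefficients (the paper's maps $Z\mapsto \kappa\,Z^{-3/2}A(t,Z)$ and $Z\mapsto \kappa\,Z^{-3/2}B(t,Z)$, equivalently your $a,b$ in the variable $\zeta=\psi(Z)$), and then invoke uniform continuity of these maps on a compact set bounded away from $Z=0$ together with the vanishing mesh of the partition. Your explicit caveat that $\tilde y$ and $S$ must be restricted to bounded sets is not a weakness relative to the paper: the paper's own final bounds carry the same factors $|\tilde y|$ and $|S|+\overline{Z}$, so its claimed uniformity over all of $\R$ in those variables has the identical limitation.
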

For a proof see \ref{sec:annex_proof}\,.

{The first term on the right-hand side of the closed-form approximation strategy \eqref{eq:pseudooptimal} is akin to the optimal liquidation rate in the continuous Almgren--Chriss model. This component liquidates the inventory of the LT according to her urgency, her current inventory, the terminal penalty, and the remaining trading time.    The second term is an arbitrage component; it accounts for the spread between the DEX rate $Z$ and the CEX rate $S$. In contrast to the classical literature, the two components also adjust the trading activity of the LT according to the stochastic convexity costs in the pool. In particular, the LT liquidates and speculates more aggressively when the depth of liquidity in the pool is higher; see Section \ref{sec:AMM2_2}.}

\subsection{Comparison with the optimal strategy \label{sec:model:comparison}}
In this subsection, we use an Euler scheme to compute the optimal strategy derived in \ref{apx:Model}, which maximises \eqref{eq:perfcriteria_modelI}, where convexity costs are not piecewise constant. The optimal strategy {is computationally prohibitive because it} uses a {four}-dimensional grid; one dimension is time, one is the rate $\tilde Z,$ one is the oracle rate $\tilde S$, {and one is the inventory $y$}.

Figure \ref{fig:comparisonnumerical} compares the optimal strategy, obtained numerically, and the closed-form approximation strategy \eqref{eq:pseudooptimal} for different values of the inventory $y$, the rate $Z$, and the oracle rate $S$, {when the value of the permanent rate impact parameter is $c=0$. Figure \ref{fig:comparison_perm_impact} compares both strategies for different non-zero values of the permanent impact.} Figure  \ref{fig:comparisonnumerical} indicates that both the closed-form approximation and the optimal strategies are significantly close and both capture the same financial effects. In particular, the strategies clearly depend on the {difference $S-Z$ in rates}  and the inventory $y.$ When the inventory of the LT is zero, the strategy is mostly speculative because the strategy buys asset $Y$ when the oracle rate $S$ is above the rate $Z$, and sells otherwise. When the inventory of the LT is positive (negative), the optimal strategy buys (sells) asset $Y$ only when $S$ is significantly higher (lower) than $Z$. Figure \ref{fig:comparisonnumerical} also shows that the absolute difference between both strategies increases as the difference between the DEX rate $Z$ and the CEX rate $S$ increases, and the difference is minimal when the rates are equal.  {Finally, Figure \ref{fig:comparison_perm_impact} shows that while the absolute difference between the trading speeds increases with the permanent impact, the changes remain negligible even for significantly large values of the permanent impact $c$.}  {These observations hold for other values of the model parameters.}

In practice, by arbitrage, the rates $Z$ and $S$ {in liquid pools} are aligned by LTs trading in the pool, so  differences between both rates are small. For instance,  with the market data we use in Section \ref{sec:Performance}, the average absolute difference between both rates is $2$ USD ($0.07\%$) in the liquid pool and $10$ USD ($0.37\%$) in the illiquid pool we consider.  {However, if the trading activity in the AMM is low and execution costs are high because of fee levels and limited liquidity, then the no-trade intervals (see \cite{he2024liquidity}) may lead to larger discrepancies between the exchange rates than those observed in the pools we consider. Figures  \ref{fig:comparisonnumerical} and \ref{fig:comparison_perm_impact} show that the closed-form approximation strategy departs significantly from the optimal strategy only when the difference between the CEX and DEX rates is exceedingly high, in which case the approximation strategy is not appropriate. However, these differences are seldom observed in AMMs such as Uniswap v3.}

\begin{figure}[!h]\centering
\includegraphics{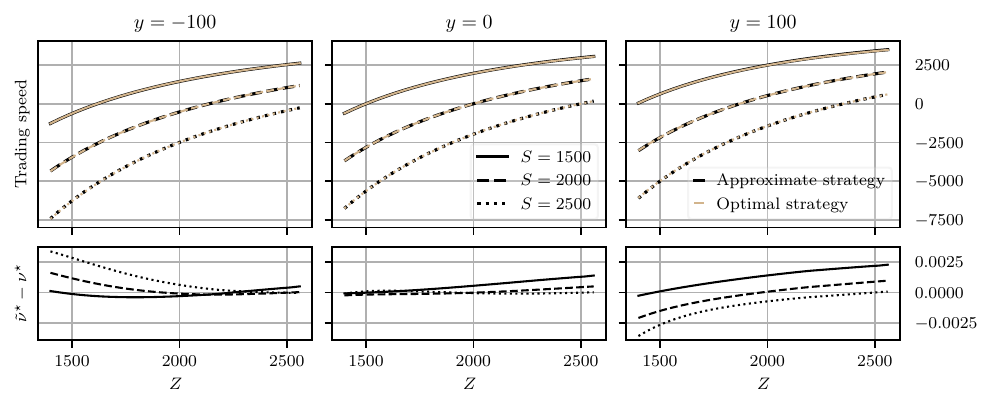}\\
\caption{Comparison of the optimal strategy \eqref{eq:perfcriteria_modelI} obtained with a numerical scheme and the closed-form approximation strategy in \eqref{eq:pseudooptimal}. Model parameters: $T=0.1$, $\sigma = 0.03$, $\gamma = 0.02$, $\beta = 1$, $\alpha = 0.1$, $\kappa = 10^7$, $\eta=1$, $\phi=10^{-5}$, and $Z_0=S_0=2000$. Inventory is $y=-100$ (left panel), $y=0$ (middle panel), and $y=100$ (right panel). {The lower panels show the difference in trading speed in terms of the number of assets $Y$ traded per order if the LT trades at the block creation frequency in Ethereum, i.e., if $\Delta t= 13$ seconds. 
}}
\label{fig:comparisonnumerical}
\end{figure}

{

\begin{figure}[!h]\centering
    \centering
    \subfloat[$c= 0.001$.] {{\includegraphics{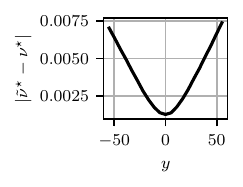} }}%
    \qquad
    \subfloat[$c= 0.01$.] {{\includegraphics{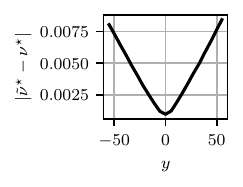} }}%
    \qquad
    \subfloat[$c=0.1$.] {{\includegraphics{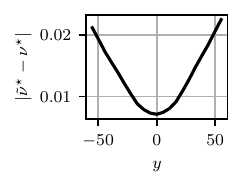} }}%
    \caption{{Mean absolute difference between the optimal strategy and the closed-form approximation strategy for different values of the permanent impact parameter $c$ and for values of $Z$ and $S$ such that $|Z-S|\le 40\% \, Z_0\,.$ For instance, in the ETH/USDC pool, a value $c=0.1$ corresponds to a permanent impact of $0.1$ USDC for each unit of ETH traded, which is too large in practice. The panels show the difference in trading speed in terms of the number of assets $Y$ traded per order if the LT trades at the block creation frequency in Ethereum, i.e., if $\Delta t= 13$ seconds. 
    }}%
    \label{fig:comparison_perm_impact}%
\end{figure}

\section{Optimal trading when rates form in the DEX \label{sec:Model2}}

In the future, activity in AMMs is expected to increase, and so will the informational content of the rates implied by the pool. An increase in activity would affect our modelling choices in two ways. One, the innovations in the depth $\kappa$ will occur more often, so its value cannot be regarded as constant throughout the execution window. Two, the  oracle rate becomes redundant because the rates in the AMM become efficient, so they incorporate all the information available to market participants -- i.e., the discrepancies with rates in other trading venues are negligible and economically insignificant.

In this section, we consider the problem of an LT who wants to exchange a large position in asset $Y$ into asset $X$ in a CPM. The key differences with the model of Section \ref{sec:Model} are that (i) the  marginal rate in the AMM is efficient, so the LT does not use an oracle rate from another venue, and (ii) the AMM implements a CL feature and the activity of LPs is high. Thus, the depth $\kappa$ exhibits frequent and random updates. For simplicity, we omit the permanent impact in the formulation of our problem because, similar to the result of Proposition \ref{prop:perm impact is useless}, one can show that the permanent impact may be ignored to derive the closed-form approximation strategy.  

The LT must liquidate a large position in asset $Y$ over a period of time $[0,T]$. The marginal exchange rate $(Z_t)_{t\ge 0}$, the depth $(\kappa_t)_{t\ge 0}$, the inventory $(y_t)_{t\ge 0}$ in asset $Y$, and the cash process $(x_t)_{t\ge 0}$ evolve as
\begin{equation}
\label{eq:rateProcess_model2}
dZ_{t}=\gamma\,Z_t\,dB_t\, , \quad d\kappa_{t}=\varsigma\, \kappa_{t} \,dL_{t}\,, \quad dy_t = -\nu_t\,dt\,, \quad \text{and} \quad dx_{t} = \left(Z_{t}-\eta\,\frac{Z_{t}^{3/2}}{\kappa_{t}}\,\nu_{t}\right)\,\nu_{t}\,dt\, ,
\end{equation}
where $(B_t)_{t\ge 0}$ and $(L_t)_{t\ge 0}$ are independent standard Brownian motions, and $\varsigma$ is the volatility of the pool depth $\kappa$.

{As in the previous section, the liquidity depth $\kappa$ is not set strategically in our model. In a full equilibrium model involving both LTs and LPs, competition between market participants would endogenously determine the level of liquidity. Here, we abstract from the dynamic strategic behaviour of LPs, treating them as passive. The stochastic variation in $\kappa$ instead reflects the actions of noise LPs who adjust liquidity in a non-strategic manner.}

As in the previous section, the LT's performance criterion is  \eqref{eq:perfcriteria} and similar steps to those of Section \ref{sec:Model}, which we detail in \ref{apx:Model2}, show  that solving the optimal execution problem requires solving  a semilinear PDE which we cannot solve in closed form. Analogous arguments to those of Section \ref{sec:cfapprox} lead to the approximation strategy
\begin{equation}
\label{eq:pseudooptimal Model2}
    \hat{\nu}^{\star}\left(t,{y},\kappa,Z\right)=-y\,\frac{\kappa}{\eta\,Z^{3/2}}\,A(t,\kappa,Z)\,,
\end{equation}
where
\begin{equation}\label{eq:pseudoAB Model2}
    \begin{split}
        A(t,\kappa,Z)=\sqrt{\frac{\phi\,\eta\,Z^{3/2}}{\kappa}}\,\tanh\left(\sqrt{\frac{\phi\,\kappa}{\eta\,Z^{3/2}}}\,(T-t)+\arctanh\left(-\alpha\,\sqrt{\frac{\kappa}{\phi\,\eta\,Z^{3/2}}}\right)\right)\, .
    \end{split}
\end{equation}

The strategy \eqref{eq:pseudooptimal Model2} is a function of the stochastic convexity costs in the AMM. In particular, the strategy liquidates the outstanding inventory at a speed that increases with the depth $\kappa,$ and decreases with the marginal rate $Z.$

\section{Optimal trading when rates form in both the CEX and the DEX \label{sec:Model3}}

When the exchange rates of asset $Y$ in terms of asset $X$ form in both the CEX and the DEX, the dynamics in \eqref{eq:ZProcess} and \eqref{eq:rateProcess_model2} are not appropriate. In particular, we expect both rates to be cointegrated, akin to the joint dynamics observed for equity shares quoted in multiple trading platforms; for more details on cointegration and its applications in algorithmic trading, see \cite{johansen1991estimation, comte1999discrete, bergault2022multi,  drissi2022solvability, cartea2023execution}.

{In this section, we assume that the pair of assets is traded on two AMMs, where we denote the marginal rates by $Z_1$ and $Z_2$, and one CEX, where we denote the quoted rate by $S$. Our model can be generalised to a pair of assets traded on mutiple pools with different fee tiers as in Uniswap v3. The joint dynamics $\boldsymbol{P}=\left(Z_1,Z_2,S\right)^{\intercal}$ of the marginal rates follow the three-dimensional Ornstein--Uhlenbeck dynamics 
\begin{equation}\label{eq:model3 dynamics P}
d\boldsymbol{P}_{t}=\boldsymbol{\Pi}\,\left(\boldsymbol{\overline{P}}-\boldsymbol{P}_{t}\right)\,dt+\boldsymbol{V}\,d\boldsymbol{W}_{t}\,,
\end{equation}
where $\boldsymbol{\Pi}$ is a $3\times3$ mean-reversion matrix, $\boldsymbol{\overline{P}} \in \R^3$ is the long-term unconditional mean value of both rates, $\boldsymbol{V}$ is the Cholesky decomposition of the correlation matrix $\boldsymbol{\Sigma}$ of both rates, and $\boldsymbol{W}$ is a three-dimensional standard Brownian motion. In the dynamics \eqref{eq:model3 dynamics P}, the matrix  $\boldsymbol{\Pi}$ drives the deterministic part of the joint dynamics of the CEX and DEX rates, i.e., the speed at which both rates are aligned and at which they converge to their long-term mean $\boldsymbol{\overline P}.$ 

The inventory of the LT follows the dynamics in \eqref{eq:ytildeProcess_modelI}, her cash follows the dynamics in  \eqref{eq:cash}, and her performance criterion is in \eqref{eq:perfcriteria_modelI}. 
Here, we derive an approximation strategy that can be implemented in real time by the LT. More precisely, we assume constant convexity costs \eqref{eq:impact_deterministic} and the alternative performance criterion \eqref{eq:perfcriteria}. Following similar arguments as those in Section \ref{sec:Model} which we detail in \ref{apx:Model3}, we obtain the closed-form approximation strategy
\begin{equation}\label{eq:model3:approx_strategy}
\hat\nu^{\star}(t,y,\boldsymbol{P})=-y\,\frac{\kappa}{\eta\,\boldsymbol{\mathcal{X}}\,\boldsymbol{P}^{3/2}}\,A(t,\boldsymbol{P})+\frac{1}{2\,\eta\,\boldsymbol{\mathcal{X}}\,\boldsymbol{P}^{3/2}}\,B(t,\boldsymbol{P})\,\left(\boldsymbol{\overline{P}}-\boldsymbol{P}\right)\,,
\end{equation}
when rates form in both the CEX and the DEX, where 
\begin{equation}
\begin{cases}
A(t,\boldsymbol{P})= & \sqrt{\frac{\phi\,\eta\,\boldsymbol{\mathcal{X}}\,\boldsymbol{P}^{3/2}}{\kappa}}\,\tanh\left(\sqrt{\frac{\phi\,\kappa}{\eta\,\boldsymbol{\mathcal{X}}\,\boldsymbol{P}^{3/2}}}\,(T-t)+\arctanh\left(-\alpha\,\sqrt{\frac{\kappa}{\phi\,\eta\,\boldsymbol{\mathcal{X}}\,\boldsymbol{P}^{3/2}}}\right)\right)\,,\\
B(t,\boldsymbol{P})= & -\int_{t}^{T}e^{\int_{t}^{s}\frac{\kappa}{\eta\,\boldsymbol{\mathcal{X}}\,\boldsymbol{P}^{3/2}}\,A(u,\boldsymbol{P})\,du}\,\boldsymbol{\mathcal{X}}\,\boldsymbol{\Pi}^{\intercal}\,e^{-\boldsymbol{{\Pi}}^{\intercal}(s-t)}\,ds\,,
\end{cases}
\end{equation}
where $\boldsymbol{\mathcal{X}}$ is an indicator matrix which identifies the trading venues in which the agent executes trades. 

Similar to  strategy \eqref{eq:pseudooptimal}, when rates form in the CEX, strategy \eqref{eq:model3:approx_strategy} relies on two components. The first term on the right-hand side of \eqref{eq:model3:approx_strategy} is linear in the inventory of the LT. When the convexity costs decrease, i.e., when the depth of liquidity in the pool increases, this component liquidates the inventory more aggressively. The second term on the right-hand side of \eqref{eq:model3:approx_strategy} is a speculative component that trades according to the difference between the joint CEX/DEX rates or the joint DEX/DEX rates $\boldsymbol{P}$, and their long-term unconditional mean $\overline{\boldsymbol{P}}$.}

}

\section{Performance of strategies \label{sec:Performance}}

{Here, we use Uniswap v3 data for the liquid pool ETH/USDC and the illiquid pool ETH/DAI. At present, rates form mainly in the CEX Binance as detailed in \ref{sec:apx:dataanalysis}. Thus, }we study the performance of the closed-form approximation strategy in \eqref{eq:pseudooptimal} and \eqref{eq:pseudoAB},  {which corresponds to when rate formation is exclusive to CEXs}. We consider two setups. One focuses on liquidating a large position in one asset and the other uses the lead-follow relationship between the oracle and AMM rates to execute a statistical arbitrage. {In this section,} {we account for AMM and gas fees and assume that the orders sent by the LT do not impact the dynamics of the pools.}

We use in-sample data to estimate model parameters and use out-of-sample data to execute the strategies. For in-sample data, we use a {rolling} window of 24 hours prior to the start of the trading programme for both pools. For out-of-sample data, we use {rolling} windows of $2$ and $12$ hours when the LT trades in the liquid and illiquid pools, respectively. To measure performance, we use rolling time windows between 1 July 2021 and 31 December 2023 for estimation and execution.  Specifically, after every execution programme, we shift both windows by $2$ and $12$ hours for the liquid and illiquid pools, respectively, and repeat the same procedure, i.e., estimate parameters with in-sample data and trade with out-of-sample data. We remark that we do not simulate rates, we use those of the AMM and Binance, and the execution costs are those the trades would have received. In total, we run { 8,579 and 1,747} execution programmes for ETH/USDC and ETH/DAI, respectively.

We proceed as follows. Subsection \ref{subsec: liquidation stategy} describes how parameter estimates are obtained and showcases the performance of the liquidation strategy.  Subsection \ref{subsec: speculation stategy} showcases the performance of the statistical arbitrage strategy.

\subsection{Liquidation strategy}\label{subsec: liquidation stategy}

We describe how to estimate the in-sample model parameters for every run of the strategy. For rate dynamics, the LT performs OLS regressions on the discretised versions  
\begin{align}\label{eq: discrete rate dyn}
\Delta \log S_t =& -\frac{\sigma^2}{2} \, \Delta t + \sigma  \,\sqrt{\Delta t} \, \upsilon_t\,,
\nonumber \\
\Delta \log Z_t  =& -\frac{\gamma^2}{2} \, \Delta t + \beta\, \left(\frac{S_t - Z_t}{Z_t}\right) \,\Delta t + \gamma \, \sqrt{\Delta t} \, \epsilon_t\,,
\end{align}
of \eqref{eq:SProcess} and \eqref{eq:ZProcess}, where  $\{\epsilon_t\,,\upsilon_t\}$ are error terms. Here, the size of the time-step $\Delta t$ is the frequency of the liquidity taking orders (from all LTs) that arrive in the pool during the estimation period.

For the liquidation strategy, we target a participation rate of $50\%$ of the observed hourly volume to set the initial inventory, which is liquidated by the LT over the trading window at the same frequency as the observed average trading frequency over the in-sample estimation period. The trading frequency determines the value of the parameter $\eta$ in \eqref{eq:execratesApproxCPM_nu}, which scales the convexity costs.

The value of the other model parameters are as follows. The value of parameter $\kappa$ is the last observed depth of the pool before the start of the execution. The value of the running inventory parameter $\phi$  is kept constant for all runs. The value of the terminal penalty parameter $\alpha$ is arbitrarily large to enforce full liquidation of outstanding inventory by the end of the trading horizon. {For all strategy runs, $T=2$ hours and $\alpha=10\,  \textrm{USDC}\cdot\textrm{ETH}^{-2}$ for ETH/USDC, and $T=12$ hours and $\alpha=10\,  \textrm{DAI}\cdot\textrm{ETH}^{-2}$ for ETH/DAI.} Finally, as a more detailed example, \ref{Example one run liquidation} describes parameter estimation and performance for a specific run of the liquidation strategy.

We benchmark the performance of the liquidation strategy with two strategies: TWAP, which consists in trading at a constant rate; and a single order execution strategy, which consists in executing the entire order at the beginning of the execution window.  The market rates at the time of trading are used to compute the convexity costs for all strategies. Gas fees are $5$ USD per transaction, regardless of transaction size. On the other hand, AMM fees depend on transaction size, {and here we impute a $0.05 \, \%$ fee for ETH/USDC and a $0.3\%$ fee for ETH/DAI to the value of every transaction}.  Tables \ref{table:perfsUSDC} and \ref{table:perfsDAI} show the average and standard deviation of the gross PnLs which is given by $x_T + y_T \, Z_T - y_0 \, Z_0,$ the number of transactions, the gas fees, and the AMM fees.\footnote{Gross PnL, as opposed to net PnL, is computed without the AMM fees and gas fees paid by the LT.}

{\footnotesize
\begin{table}[h]
\scriptsize
\parbox{.45\linewidth}{
\centering
\begin{tabular}{c  r  r   r  r} 
\hline 
 &  Gross avg.  & Std. & Avg. & Avg.   \\
 & PnL  & dev. & $\#$trades &  fees  \\ [0.5ex]
\hline
Single order & $-$827,692 & 2,614,863 & 1 & 10,575\\ [1ex] 
TWAP         &  919 &  206,538 & 483 & 10,575\\ [1ex] 
Liquidation  &   \\ 
$(\phi=0.01)$  &  11,337 &  232,492 & 483 & 16,832 \\ 
Liquidation  &   \\ 
$(\phi=0.005)$  &  15,988 &  291,716 & 483 & 17,996 \\  
Liquidation  &   \\ 
$(\phi=0.001)$  &  23,525 &  404,636  & 483 & 21,165 \\ 
\hline 
Speculative  &   \\ [-0.4ex]
$(y_0=0)$  &     23,818 &  389,603 & 483 & 15,096 \\ [-0.45ex]
$(\phi=0.001)$ & \\ 
\hline 
\end{tabular}
\caption{Performance and fees for ETH/USDC. The Average PnL does not include fees. {The performance is based on 8,579 runs.}}
\label{table:perfsUSDC}
}
\hfill
\parbox{.45\linewidth}{
\centering
\begin{tabular}{c  r  r  r  r} 
\hline 
 &  Gross avg.  & Std. & Avg.  & Avg.   \\
 & PnL  & dev. & $\#$trades &  fees  \\ [0.5ex]
\hline
Single order & $-$148,352 & 613,652 & 1 & 8,527\\ [1ex] 
TWAP         &   $-$2,233 &  92,468 & 84 & 9,917 \\ [1ex] 
Liquidation  &   \\ 
$(\phi=0.01)$  &  849 &  48,627 & 84 & 9,895 \\ 
Liquidation  &   \\ 
$(\phi=0.005)$  &  1,449 &  58,177 & 84 & 9,923\\  
Liquidation  &   \\ 
$(\phi=0.001)$  &  2,164 &  79,448 & 84 & 10,403 \\ 
\hline 
Speculative  &   \\ [-0.4ex]
$(y_0=0)$  &    2,671 &  14,180 & 84 & 2,381 \\ [-0.45ex]
$(\phi=0.001)$ & \\ 
\hline 
\end{tabular}
\caption{Performance and fees for ETH/DAI. The Average PnL does not include fees. {The performance is based on 1,747 runs.}}
\label{table:perfsDAI}
}
\end{table}
}

Tables \ref{table:perfsUSDC} and \ref{table:perfsDAI} show that liquidating all the inventory in one trade is sub-optimal compared with the other strategies due to the high execution costs of the large order. In both cases, our model outperforms TWAP in terms of the ratio between performance, net of fees, and risk measured by the standard deviation. Key to the outperformance is that the liquidation strategy uses the rates in Binance as a trading signal.  {Finally, as the value of the penalty $\phi$ for holding inventory decreases, the speculative component becomes prominent, so the LT executes more CEX/DEX arbitrages, resulting in an increased PnL but also in increased risk and execution fees.}

\subsection{Speculative strategy \label{subsec: speculation stategy}}
We consider the same setup as before, i.e., the in-sample estimation and out-of-sample execution. Here, the LT arbitrages the AMM. To this end, the LT starts with zero inventory in $Y$ and sets the values of the urgency parameter $\phi$ to $0.001 \ \textrm{USDC}\cdot\textrm{ETH}^{-2}$ and $0.001 \ \textrm{DAI}\cdot\textrm{ETH}^{-2}$ for the liquid and illiquid pools, respectively. The strategy profits from the oracle rate as a predictive signal. The last rows in Tables \ref{table:perfsUSDC} and \ref{table:perfsDAI} show the average and standard deviation of the gross PnLs, the number of transactions, and the estimated AMM and gas fees.

\section{Conclusions \label{sec:Conclusion}}

In this work, we used Uniswap v3 data to analyse rate, liquidity, and execution costs of CPMs. We proposed a model for {execution costs and for} optimal trading in CPMs where we assumed that rates are formed in an alternative CEX and the liquidity provided in the CPM is constant for relevant periods of time. {Also, we proposed models when rates are efficient in the DEX, or when rates form in both the DEX and in the CEX.}  Finally, we used in-sample estimation of model parameters and out-of-sample market data to test the performance of a closed-form approximation of the optimal strategy, so our results do not rely on simulations. We showed that our strategy considerably outperforms TWAP and a strategy that consists in sending a single large order. We also showed that there are significant arbitrage opportunities between Binance and AMM rates.

Our models consider CPMs with CL and assume zero impact on the strategic behaviour of LPs. Future work should explore strategic liquidity provision; see  \cite{fan2021strategic}, \cite{neuder2021strategic}, \cite{fan2022differential}, \cite{fukasawa2023model}, \cite{li2023yield},   \cite{lommers2023:case}, \cite{goyal2023finding}, and \cite{capponi2023paradox} who study different aspects of liquidity provision in AMMs.  Finally, there is a growing literature on the design of AMMs. For instance, \cite{evans2021optimal} study optimal fees in geometric markets, \cite{cartea2023automated,cartea2024strategic} generalise CFMs and propose dynamic fees where LPs express their risk preferences, and \cite{goyal2023finding} study AMMs with dynamic trading functions; see also \cite{bergault2022automated, sabate2022variable, cohen2023inefficiency, curry2024optimal, wood2024expiring, alexander2024theoretical}. The applicability of our models in these AMMs requires a careful analysis of execution costs.

\clearpage
\appendix
\section{Uniswap v3: pool depth and oracle rates \label{sec:apx:dataanalysis}}

Transaction information from decentralised exchanges is public. 
Here, we analyse transaction data of the most liquid pools of Uniswap v3 for the two pairs of assets ETH/USDC and ETH/DAI. {The ETH/USDC pool is the Uniswap v3 pool with the address \texttt{0x88e6a0c2ddd26feeb64f03
9a2c41296fcb3f5640} which charges a $0.05\%$ proportional fee, and the ETH/DAI pool is the Uniswap v3 pool with the address \texttt{0xc2e9f25be6257c210d7adf0d4cd6e3e881ba25f8} which charges a $0.3\%$ proportional fee.} These pools are considered an alternative to LOB-based trading venues such as Binance, which is the most liquid and active venue for both pairs. 
The ticker ETH represents the cryptocurrency \textit{Ether}, which is the native cryptocurrency of the Ethereum blockchain. The ticker USDC represents \textit{USD coin}, a cryptocurrency fully backed by U.S. Dollars (USD); and DAI represents the cryptocurrency \textit{Dai}, which tracks parity with the U.S. Dollar.

\begin{table}[!h]
    \centering
    \begin{footnotesize}
    \begin{tabular}{l|rrr|rrr}
    \hline
    & \multicolumn{3} {c|} {ETH/USDC $0.05\%$} & \multicolumn{3} {c} {ETH/DAI $0.3\%$} \\
    & \multicolumn{3} {c|} {(6.76$\times\text{10}^\text{6}$ LT transactions} & \multicolumn{3} {c}  {(218,045 LT transactions)} \\
    & \multicolumn{3} {c|} {200,490 LP transactions} & \multicolumn{3} {c}  {21,261 LP transactions} \\
    & \multicolumn{3} {c|} {17.53$\times\text{10}^\text{6}$  Binance transactions)} & \multicolumn{3} {c}  {5.92 $\times\text{10}^\text{6}$ Binance transactions)} \\
    \hline
     &          mean &    median &  std. dev.
    &          mean &    median &  std. dev.    \\
    \hline 
    LT transaction size &&&&&& \\ [-0.5ex]
     (USD)   &  69,615 &  4,95 & 235,780 &  65,035 &  29,259 & 125,550 \\ [0.5ex]

    LT trading frequency &&&&&&\\ [-0.5ex]
     (seconds) &  14.32 & 12 &  23.08 &  440 &  90 & 1282\\  [0.5ex]

    LT gas fee &&&&&& \\ [-0.5ex]
     (USD) &  67.52 & 19.39 &  346.95 &  131.19 &  36.79 & 611.54 \\ [0.5ex]

    LT unitary execution costs \eqref{eq:executionCost} &&&&&&\\ [-0.5ex]
     (USD)  &  0.25 & 0.014 &  1.137 &  1.54 &  0.87 & 4.34 \\ [0.5ex]
    
    LP transaction size &&&&&& \\ [-0.5ex]
     (USD) & 7,560,310 & 137,890 & 12,457,171 & 575,246 & 1,355 & 3,255,062 \\ [0.5ex]

    LP trading frequency &&&&&&\\ [-0.5ex]
     (seconds) &  486 & 126 &  14,713 &  4,588 &  1,082 & 9,496\\  [0.5ex]
    
    LP gas fee &&&&&& \\ [-0.5ex]
     (USD) &  45.53 & 19.90 &  124.6 &  58.55 &  20.21 & 246.29 \\ [0.5ex]
    
    pool size &&&&&&\\ [-0.5ex] 
     (USD $\text{10}^\text{6}$)   &  2,655 &  1,618 & 456,090  &  294 &   	139 & 580 \\ [0.5ex]
    \hline
    Binance transaction size &&&&&&\\ [-0.5ex] 
     (USD)   &  1184 &  382 & 1668  &  921 &  411 & 718 \\ [0.5ex]
    Binance trading frequency &&&&&&\\ [-0.5ex] 
     (USD)   &  3.69&  0.24 & 11  &  13 &  0.77 & 59 \\ [0.5ex]
    \hline
    \end{tabular}
    \end{footnotesize}
    \caption{{Descriptive statistics for LT and LP trading activity in the ETH/USDC and ETH/DAI pools between 5 May 2021 and 31 December 2023}. }
    \label{tab:descriptive_statistics}
\end{table}

{Table \ref{tab:descriptive_statistics} shows that LT trading activity, LP trading activity, and the depth of liquidity in the pool ETH/USDC $0.05\%$ are significantly larger than those in the pool ETH/DAI $0.3\%.$ As a consequence, execution costs are lower and LT and LP transaction sizes are smaller in the ETH/USDC pool. Table \ref{tab:descriptive_statistics} also shows that the trading activity in Binance is considerably higher that that in Uniswap v3. In particular, trading is at a significantly higher frequency and transaction sizes are smaller.

Next, we study the suitability of our assumption of constant pool depth $\kappa$ in the optimal trading model of  Section \ref{sec:Model}. Recall that, compared with most CPMs, Uniswap v3 operates with the CL feature. LPs in Uniswap v3 specify the range of rates where they supply liquidity. Therefore,  with CL, the pool is characterised by the distribution of liquidity across ranges of rates.   First, Table \ref{tab:descriptive_statistics}  shows that LP trading activity is significantly lower than LT trading activity in both pools; LTs trade at a higher frequency and the cumulative size of their trading is higher.\footnote{For instance, LPs executed a transaction every $8$ minutes in the ETH/USDC pool, the most active pool in Uniswap v3, and every $76$ minutes in the ETH/DAI pool.} Thus, at present, the distribution of liquidity in  CL pools may be assumed constant for the trading horizons that we consider. Second, due to the CL feature of Uniswap v3, the depth $\kappa$ of the pool may change when the marginal rate crosses the boundary of a tick. In particular, when the volume of an LT transaction is large enough to make the marginal rate cross a tick where the level of liquidity changes, the AMM treats it as multiple trades, each with a different value of $\kappa$. In our data, most of the liquidity is concentrated around the marginal rate and the depth is the same over a large range around the rate; see \cite{drissi2023models} for more details.  
Thus, one  may assume a constant pool depth $\kappa$ because of the low LP trading activity and the shape of the distribution of liquidity around the marginal rate in Uniswap v3.

}

{
\section{Proof of Proposition \ref{prop:perm impact is useless} \label{apx:proofprop1}}
First, use integration by parts to write\begin{align*}
&\tilde{x}_{T}+y_{T}\tilde{Z}_{T}-\alpha\,{y_{T}}^{2}\\=\,&x_{0}+\int_{0}^{T}\nu_{t}\,\tilde{Z}_{t}\,dt-\int_{0}^{T}\eta\,\zeta\,\nu_{t}^{2}\,dt+y_{T}\,\tilde{Z}_{T}-\alpha\,y_{T}^{2}
\\=&x_{0}+\int_{0}^{T}y_{t}\,d\tilde{Z}_{t}+y_{0}\,Z_{0}-y_{T}\,\tilde{Z}_{T}-\int_{0}^{T}\eta\,\zeta\,\nu_{t}^{2}\,dt+y_{T}\,\tilde{Z}_{T}-\alpha\,y_{T}^{2}
\\=&\,x_{0}+y_{0}\,Z_{0}+\int_{0}^{T}y_{t}\,dZ_{t}+\frac{c}{2}\,\left(y_{T}^{2}-y_{0}^{2}\right)-\int_{0}^{T}\eta\,\zeta\,\nu_{t}^{2}\,dt-\alpha\,y_{T}^{2}\,.
\end{align*}
Next, use the dynamics \eqref{eq:unaffected cash dynamics} to write
$$x_{T}+y_{T}\,Z_{T}=x_{0}+\int_{0}^{T}\nu_{t}\,Z_{t}dt-\int_{0}^{T}\eta\,\zeta\,\nu_{t}^{2}dt+y_{T}\,Z_{T}=x_{0}+y_{0}\,Z_{0}+\int_{0}^{T}y_{t}dZ_{t}-\int_{0}^{T}\eta\,\zeta\,\nu_{t}^{2}\,dt\,,$$
so
$$\tilde x_{T}+y_{T}\,\tilde{Z}_{T}-\alpha\,y_{T}^{2}=x_{T}+y_{T}\,Z_{T}-\frac{c}{2}\,y_{0}^{2}-\left(\alpha-\frac{c}{2}\right)\,{\tilde{y}_{T}}^{2}\,,$$
which proves the result.
\qed
}

\section{Proof of Theorem \ref{thm}}\label{sec:annex_proof}

Recall that for each fixed values of  $N$ and $j$
\begin{equation}
\nu^{\star,j,N}\left(t,y,Z,S\right)=-\frac{1}{\eta\,\zeta_{N}^{j}}A_{j,N}(t)\,y+\frac{1}{2\,\eta\,\zeta_{N}^{j}}B_{j,N}(t)(S-Z)\ ,
\end{equation}
where
\begin{equation}
    \zeta_{N}^{j}\coloneqq \frac{1}{\kappa}\left(Z^N_j\right)^{3/2}\,,
\end{equation}
and
\begin{equation}
    \begin{split}
        &A_{j,N}(t)\coloneqq A_{\zeta_j^N}(t)=\sqrt{\phi\,\eta\,\zeta_{N}^{j}}\tanh\left(\frac{\sqrt{\phi}}{\sqrt{\eta\,\zeta_{N}^{j}}}\,t+\arctanh\left(-\frac{\alpha}{\sqrt{\phi\,\eta\,\zeta_{N}^{j}}}\right)\right)\ ,\\
        &B_{j,N}(t)\coloneqq B_{\zeta_j^N}(t) = -\int_{t}^{T}\beta \exp\left(-\int_{t}^{s} \left(\beta-\frac{1}{\eta\,\zeta_{N}^{j}}\,A_{j,N}(u)\right)du\right)ds\,.
    \end{split}
\end{equation}

Moreover, recall that 
\begin{equation}
    \begin{split}
        A(t,Z) =& \sqrt{\frac{\phi\,\eta\,Z^{3/2}}{\kappa}}\tanh\left(\frac{\sqrt{\phi\,\kappa}}{\sqrt{\eta\,Z^{3/2}}}\,t+\arctanh\left(-\frac{\alpha\,\sqrt{\kappa}}{\sqrt{\phi\,\eta\,Z^{3/2}}}\right)\right)\ ,\\
        B(t,Z) =& \int_{t}^{T}\beta \exp\left(-\int_{t}^{s} \left(\beta-\frac{\kappa}{\eta\,Z^{3/2}}\,A(u,Z)\right)du\right)ds\,.\\
    \end{split}
\end{equation}

To prove \eqref{eq:inequality}\,, take $(t,y,S)$ and write
\begin{equation}
\begin{split}
    \big|\nu^{\star,j,N}\big(t,y&,Z^N_{j+1},S\big)-\nu^{\star,j+1,N}\big(t,y,Z^N_{j+1},S\big)\big| \\ =&\left|-\frac{1}{\eta\,\zeta_{N}^{j}}A_{j,N}(t)\,y
    + \frac{1}{\eta\,\zeta_{N}^{j+1}}A_{j+1,N}(t)\,y + (S-Z^N_{j+1})\left(\frac{1}{2\,\eta\,\zeta_{N}^{j}}B_{j,N}(t)-\frac{1}{2\,\eta\,\zeta_{N}^{j+1}}B_{j+1,N}(t)\right)\right|\\
    \leq&\,\frac{|y|}{\eta}\left|-\frac{1}{\zeta_{N}^{j}}A_{j,N}(t)
    + \frac{1}{\zeta_{N}^{j+1}}A_{j+1,N}(t) \right|+\frac{\left|S\right|+\overline{Z}}{\eta}\, \left| \frac{1}{2\,\zeta_{N}^{j}}B_{j,N}(t)-\frac{1}{2\,\zeta_{N}^{j+1}}B_{j+1,N}(t)\right|\\
    =&\,\frac{|y|}{\eta}\left|-\frac{\kappa}{\left(Z^N_j\right)^{3/2}}A\left(t,Z^N_j\right)
    + \frac{\kappa}{\left(Z^N_{j+1}\right)^{3/2}}A\left(t,Z^N_{j+1}\right) \right|\\
    &+\frac{\left|S\right|+\overline{Z}}{2\,\eta}\, \left| \frac{\kappa}{\left(Z^N_j\right)^{3/2}}B\left(t,Z^N_j\right)-\frac{\kappa}{\left(Z^N_{j+1}\right)^{3/2}}B\left(t,Z^N_{j+1}\right)\right|\,.
\end{split}
\end{equation}

Observe that for a fixed $t\in[0,T]$ the functions
\begin{equation}
    \begin{split}
        Z\mapsto\frac{\kappa}{Z^{3/2}}A\left(t,Z\right)\, \ \ \text{and} \ \ 
        Z\mapsto\frac{\kappa}{Z^{3/2}}B\left(t,Z\right)
    \end{split}
\end{equation}
are uniformly continuous on $\left[\underline{Z},\overline{Z}\right]$ because they are both compositions of continuous functions defined over a closed interval. By definition of the partition in \eqref{eq:partitionedConvexity}, $\left|Z^N_{j} - Z^N_{j+1}\right| = 1/N $ so for each $\varepsilon>0$ there exists $N\in\N$ such that
\begin{equation}
    \max_{j=1,\dots,N}\big|\nu^{\star,j,N}\big(t,y,Z^N_{j+1},S\big)-\nu^{\star,j+1,N}\big(t,y,Z^N_{j+1},S\big)\big|\leq\varepsilon\,.
\end{equation}

To prove that $\{\hat{\nu}^{\star,N}\}$ converges uniformly to $\hat{\nu}^{\star}$ in $ [0,T]\times\R\times\left[\underline{Z},\overline{Z}\right]\times\R $\,, take $(t,y,Z,S)\in[0,T]\times\R^{2}\times\left[\underline{Z},\overline{Z}\right]\times\R\,, $ and $N\in\N\,,$ and observe that there exists $\overline{j}\in\{1,\dots,N\}$ such that $Z\in\left[Z_{\overline{j}}^N,Z^N_{\overline{j}+1}\right)$ and thus
\begin{equation}
\begin{split}
    \big|\hat{\nu}^{*}_{N}\big(t,y&,Z,S\big)-\hat{\nu}^{*}\big(t,y,Z,S\big)\big| \\
    =&\big|\hat{\nu}^{*}_{\overline{j},N}\big(t,y,Z,S\big)-\hat{\nu}^{*}\big(t,y,Z,S\big)\big|\\
    =&\left|-\frac{1}{\eta\,\zeta_{N}^{\overline{j}}}A_{\overline{j},N}(t)\,y
    + \frac{\kappa}{\eta\,Z^{3/2}}A(t,Z)\,y + (S-Z)\left(\frac{1}{2\,\eta\,\zeta_{N}^{\overline{j}}}B_{\overline{j},N}(t)-\frac{\kappa}{2\,\eta\,Z^{3/2}}B(t,Z)\right)\right|\\
    \leq&\,\frac{|y|\,\kappa} {\eta\,\underline{Z}^{3/2}}\left|-A\left(t,Z^N_{\overline{j}}\right)
    + A\left(t,Z\right) \right|+ \frac{\left|S\right|+\overline{Z}} {2\,\eta\,\underline{Z}^{3/2}}\,\kappa\, \left| B\left(t,Z^N_{\overline{j}}\right)-B\left(t,Z\right)\right|\,.
\end{split}
\end{equation}

The uniform convergence of $\{\hat{\nu}^{*}_{N}\}$ to $\hat{\nu}^*$ follows from the uniform continuity of $A(t,Z)$ and $B(t,Z)$ on $[0,T]\times\left[\underline{Z},\overline{Z}\right]\,.$

\section{Example for the liquidation strategy \label{Example one run liquidation}}
In this appendix, we describe the parameters and strategy performance for a specific run of the liquidation strategy. Assume the LT will start trading at noon on 16 March 2022, so she uses the data between noon 15 March 2022 and noon 16 March to estimate model parameters. 

For the 24 hours before noon 16 March 202, there are, on average,  one liquidity taking order every 13 seconds in the liquid pool and one every 360 seconds in the illiquid pool; i.e.,  the time steps in the regressions \eqref{eq: discrete rate dyn} are $\Delta t = 13$ for ETH/USDC and $\Delta t = 360$ for ETH/DAI. Table \ref{table:params1} shows parameter estimates. 

{\footnotesize
\begin{table}[!h] 
\begin{center}
\begin{tabular}{c  c  c} 
 \hline 
 & ETH/USDC & ETH/DAI \\ [0.5ex] 
 \hline
 $\hat \sigma$ & $0.045\ \textrm{day}^{- 1/2}$ & $0.053\ \textrm{day}^{- 1/2}$ \\ [0.5ex] 
$\hat\gamma$ & $0.034\ \textrm{day}^{- 1/2}$& $0.027\ \textrm{day}^{- 1/2}$ \\ [0.5ex]
$\hat\beta$ & $657.9\ \textrm{day}^{-1}$ & $14.78\ \textrm{day}^{-1}$  \\  [0.5ex]
 \hline 
\end{tabular}
\end{center}
\caption {Parameter estimates for dynamics of $Z$ and $S$ with data between noon 15 March 2022 and noon 16 March 2022.}
\label{table:params1}
\end{table}
}

The parameter $\eta$ of the execution costs in \eqref{eq:execratesApproxCPM_nu} is also set to $13\, \textrm{seconds} \, = \, 17.3\times 10^{-5} \,\textrm{days}$ and $360\, \textrm{seconds} \,=\, 41 \times 10^{-4}\, \textrm{days}$ for the liquid and illiquid pool, respectively. The number of  transactions in the in-sample data is approximately 238,039 ETH and 4,031 ETH in the liquid and illiquid pool, respectively. Thus the LTs' target is to liquidate 14,877 and 1,007 units of ETH within $2$ and $12$ hours in the ETH/USDC and ETH/DAI pools, respectively. Table \ref{table:params2} summarises all the parameters used to run our strategy.

{\footnotesize
\begin{table}[!h]
\begin{center}
\begin{tabular}{c  r  r} 
\hline 
& ETH/USDC & ETH/DAI \\ [0.5ex] 
\hline
$\kappa_0$ & 22,561,783 & 1,666,175 \\[0.5ex] 
$y_0$ & 14,877 ETH & 1,007 ETH\\ [0.5ex]
$S_0$ & 2,689.2$\  \textrm{USDC}$ & 2,686.09$\  \textrm{DAI}$ \\ [0.5ex]
$Z_0$ & 2,690.77$\  \textrm{USDC}$ & 2,694.04$\  \textrm{DAI}$ \\ [0.5ex]
$\eta$ & $17.3\times 10^{-5}$ $\ \textrm{days}$ & $41 \times 10^{-4}$ $\ \textrm{days}$ \\
\hline 
\end{tabular}
\end{center}
\caption {Values of model parameters.}
\label{table:params2}
\end{table}
}

Figure \ref{fig:backtest_USDC_DAI} shows the marginal and oracle rates and the inventories of the strategies during the execution, for both ETH/USDC and ETH/DAI. Figure \ref{fig:backtest_USDC_DAI} clearly showcases the difference between the strategies. In particular, the liquidation strategy is speculative and trades on the difference between the two rates $S$ and $Z$ during the liquidation programme. Figure \ref{fig:backtest_USDC_DAI_speed} shows how the difference $S_t - Z_t$ drives the trading speed $\nu_t.$ The oracle rate is used as a predictive signal for future moves of the marginal rate.
\begin{figure}[!h]
    \centering
    \subfloat[\textbf{Top}: Out-of-sample marginal and oracle rates for the pair ETH/USDC. \textbf{Bottom}: Inventory process $y$ for the optimal, TWAP, and single order strategies.]  {{\includegraphics{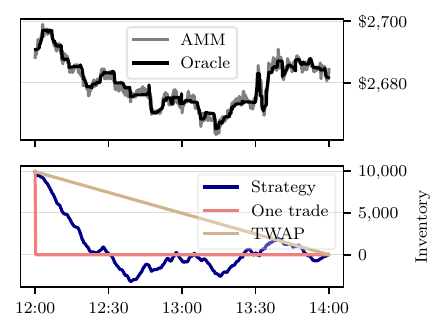} }}%
    \qquad
    \subfloat[\textbf{Top}: Out-of-sample marginal and oracle rates for the pair ETH/DAI. \textbf{Bottom}: Inventory process $y$ for the optimal, TWAP, and single order strategies.] {{\includegraphics{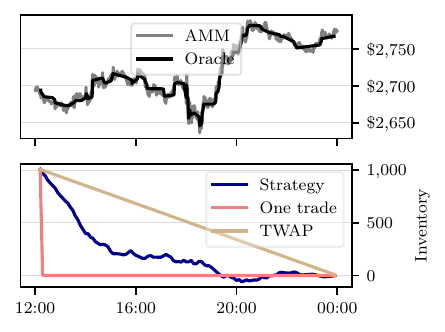} }}%
    \caption{Liquidation strategies starting at noon on 16 March 2022.}%
    \label{fig:backtest_USDC_DAI}%
\end{figure}

\begin{figure}[!h]
    \centering
    \subfloat[\textbf{Top}: Difference between the oracle and the marginal rate for ETH/USDC. \textbf{Bottom}: $-\nu_t.$] {{\includegraphics{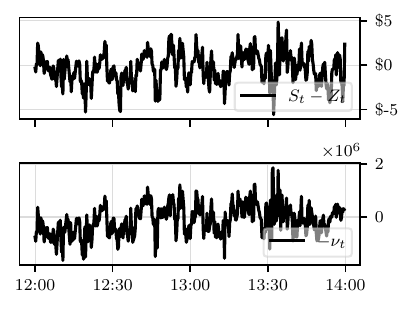} }}%
    \qquad
    \subfloat[\textbf{Top}: Difference between the oracle and the marginal rate for ETH/DAI. \textbf{Bottom}: $-\nu_t.$] {{\includegraphics{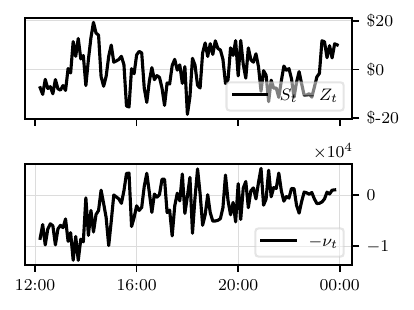} }}%
    \caption{Trading speed.}%
    \label{fig:backtest_USDC_DAI_speed}%
\end{figure}

{
\section{Model of Section \ref{sec:Model} \label{apx:Model}}

For the three models in Sections \ref{sec:Model}, \ref{sec:Model2}, \ref{sec:Model3}, we fix a filtered probability space  $\left(\Omega, \mathcal F, \P; \F = (\mathcal F_t)_{t\ge 0} \right)$ satisfying the usual conditions, where $\F$ is the natural filtration generated by the collection of observable (progressively measurable) stochastic processes that we define for each model below.

\subsection{A {semilinear PDE} \label{apx:originalmodel}}

The set of admissible strategies is
\begin{align}
\label{def:admissibleset_t_modelI}
\mathcal A_t = \left\{ (\nu_s)_{s \in [t,T]},\ \R\textrm{-valued},\ \F\textrm{-adapted, and } \int_t^T |\nu_s|^2 \,ds < +\infty, \ \ \P\textrm{-a.s.}  \right\}\,.
\end{align}
Write $\mathcal A := \mathcal A_0$ and let $\nu\in\Ac$. The LT's value function is
\begin{align}
\label{eq:valuefunc_modelI}
    u(t,\tilde x,y,\tilde Z, \tilde S)=\underset{\nu\in\mathcal{A}}{\sup} \, u^{\nu}\left(t,\tilde x,y,\tilde Z,\tilde S\right)\,,
\end{align}
and it solves the Hamilton--Jacobi--Bellman (HJB) equation{\begin{equation}\label{eq:ANX:hjbu}
    \begin{split}
        0=\,&\partial_{t}w-\phi\,y^{2}+\beta\,\left(\tilde S -\tilde Z\right)\,\partial_{\tilde Z}w+\frac{1}{2}\,\gamma^{2}\,\tilde Z^{2}\,\partial_{\tilde Z\tilde Z}w+\frac{1}{2}\,\sigma^{2}\,\tilde S^{2}\,\partial_{\tilde S\tilde S}w\\
        &+\sup_{\nu\in\R}\Bigg(\left(\nu\, \tilde Z-\frac{\eta}{\kappa}\,\tilde Z^{3/2}\,\nu^{2}\right)\partial_{\tilde x}w-\nu\,\partial_{y}w-c\,\nu\,\partial_{\tilde Z}w-c\,\nu\,\partial_{\tilde S}w\Bigg)\ ,
    \end{split}
\end{equation}}
with terminal condition \begin{equation} \label{eq:ANX:termcondu}
w(T,\tilde x,y,\tilde Z,\tilde S)=\tilde x+\tilde y\,\tilde Z-\alpha\,\tilde y^{2}\,.
\end{equation}
The form of the terminal condition \eqref{eq:ANX:termcondu} suggests the ansatz
\begin{equation}
\label{eq:ANX:ansatz1}
    w(t,\tilde x,y,\tilde Z,\tilde S)=\tilde x+y\,\tilde Z+\theta(t,y,\tilde Z,\tilde S)\,,
\end{equation}
which we substitute into \eqref{eq:ANX:hjbu} to obtain {
\begin{equation}\label{eq:ANX:hjbtheta1}
    \begin{split}
        0\,=\,&\partial_{t}\theta-\phi\,y^{2}+\beta\,\left(\tilde S-\tilde Z\right)\,\left(y+\partial_{\tilde Z}\theta\right)+\frac{1}{2}\,\gamma^{2}\,\tilde Z^{2}\,\partial_{\tilde Z\tilde Z}\theta+\frac{1}{2}\,\sigma^{2}\,\tilde S^{2}\,\partial_{\tilde S\tilde S}\theta\\
        &+\sup_{\nu\in\R}\left(-\frac{\eta}{\kappa}\,\tilde Z^{3/2}\,\nu^{2}-\nu\,\left(\partial_{y}\theta+c\left(y+\partial_{\tilde Z}+\partial_{\tilde S}\theta\right)\right)\right)\,, 
    \end{split}
\end{equation}}
with terminal condition $\theta(T,y,\tilde Z,\tilde S)= -\tilde \alpha\,y^2\,.$

The first two terms on the right-hand side of \eqref{eq:ANX:ansatz1} are the marked-to-market value of the LT's holdings and the last term is the additional value that the LTs obtains by following the optimal strategy. Next, solve the first order condition in \eqref{eq:ANX:hjbtheta1} to obtain the optimal trading speed in feedback form{
\begin{equation}\label{eq:ANX:optimalspeed}
    \nu^{*}=-\frac{\kappa}{2\,\eta}\,\tilde Z^{-3/2}\,\left(\partial_{y}\theta+c\left(y+\partial_{\tilde Z}\theta+\partial_{\tilde S}\theta\right)\right)\,,
\end{equation}}
and substitute \eqref{eq:ANX:optimalspeed} into \eqref{eq:ANX:hjbtheta1} to write {
\begin{align}\label{eq:ANX:hjbtheta2}
    0=&\,\partial_{t}\theta-\phi\,y^{2}+\beta\,\left(\tilde S-\tilde Z\right)\,\left(y+\partial_{\tilde Z}\theta\right)+\frac{1}{2}\,\gamma^{2}\,\tilde Z^{2}\,\partial_{\tilde Z\tilde Z}\theta+\frac{1}{2}\,\sigma^{2}\,\tilde S^{2}\,\partial_{\tilde S\tilde S}\theta\\&+\frac{\kappa}{4\,\eta}\,\tilde Z^{-3/2}\,\left(\partial_{y}\theta+c\left(y+\partial_{\tilde Z}\theta+\partial_{\tilde S}\theta\right)\right)^{2}\,.
\end{align}}

The functional form of the convexity costs leads to {the semilinear PDE \eqref{eq:ANX:hjbtheta2}} which we cannot  solve in closed form. The optimal trading speed in feedback form is a function of the solution to the semilinear PDE, see \eqref{eq:ANX:optimalspeed}. In practice, one can use a numerical scheme to compute the optimal trading speed, which, in our case, is too computationally expensive for it to be deployed in real time by market participants.

\subsection{Optimal strategy with constant convexity costs \label{apx:standard model}}

Let $\zeta\ge 0$ denote the constant convexity costs. Here, the value function of the LT is given by
\begin{equation}\label{eq:valuefunc0}
    u^{\zeta}(t,x,y,Z,S)=\underset{\nu^{\zeta}\in\mathcal{A}^{\zeta}}{\sup}\,u^{\zeta}(t, x,y,Z,S)\,.
\end{equation}
The optimal trading strategy is
\begin{equation}
\label{eq:optimalspeed_feedback_j2}
\nu^{\zeta,\star}\left(t,{y},Z,S\right)=-\frac{1}{\eta\,\zeta}A_{\zeta}(t)\, y+\frac{1}{2\,\eta\,\zeta}B_{\zeta}(t)\,(S-Z)\ ,
\end{equation}
and the value function is 
\begin{align}
\label{eq:valuefunction_j2}
u^{\zeta}(t,{x},{y},Z,S) = &\,  x +  y \, Z + A_{\zeta}(t)\,{y}^{2}+B_{\zeta}(t)\,Z\,{y}+C_{\zeta}(t)\,{y}\,S+D_{\zeta}(t)\,{y}+ E_{\zeta}(t)\,Z^{2}
\\&+F_{\zeta}(t)\,S^{2} +G_{\zeta}(t)\,Z\,S\,,
\end{align}
where 
\begin{equation}
\label{eq:solution_system_ode_conve}
\begin{cases}
A_{\zeta}(t) & =\sqrt{\phi\,\eta\,\zeta}\,\tanh\left(\frac{\sqrt{\phi}}{\sqrt{\eta\,\zeta}}\,(T-t)+\arctanh\left(-\frac{\alpha}{\sqrt{\phi\,\eta\,\zeta}}\right)\right)\ ,\\
B_{\zeta}(t) & =-\int_{t}^{T}\beta\exp\left(-\int_{t}^{s}\left(\beta-\frac{1}{\eta\,\zeta}A_{\zeta}(u)\right)du\right)ds\, ,\\
C_{\zeta}(t) & =-B_{\zeta}(t)\, ,\\
E_{\zeta}(t) & =\int_{t}^{T}\exp\left(-(\gamma^{2}-2\,\beta)(t-s)\right)\frac{1}{4\,\eta\,\zeta}\,B_{\zeta}(s)^{2}\,ds\, ,\\
F_{\zeta}(t) & =\int_{t}^{T}\exp\left(-\sigma^{2}(t-s)\right)\,\left(\beta\, G_{\zeta}(s)+\frac{1}{4\,\eta\,\zeta}\,C_{\zeta}(s)^{2}\right)ds\, ,\\
G_{\zeta}(t) & =\int_{t}^{T}\exp\left(\beta\,(t-s)\right)\left(2\,\beta\, E_{\zeta}(s)-\frac{1}{2\,\eta\,\zeta}\,B_{\zeta}(s)^{2}\right)\,ds\,.
\end{cases}
\end{equation}

The first term on the right-hand side of \eqref{eq:optimalspeed_feedback_j2} is the optimal liquidation rate in the continuous Almgren--Chriss model. The second term is an arbitrage component; it accounts for the spread between the marginal rate $Z$ and the oracle rate $S$.

{
\section{Model of Section \ref{sec:Model2} \label{apx:Model2}}
\subsection{A semilinear PDE \label{apx:model2PDE}}

For each $(t,x,y,Z,\kappa )\in[0,T]\times\R\times\R\times\R_{++}\times\R_{++},$ and for each admissible control $\nu\in\mathcal{A}$ the performance criterion of the LT is given by
\begin{align}
\label{eq:perfcriteria_model2}
    u^{\nu}(t,{x},{y},Z,\kappa)=\E_{t,{x},{y},Z,\kappa }\left[{x}_{T}^{\nu}+{y}_{T}^{\nu}\,Z_T-\alpha\left({y}_{T}^{\nu}\right)^{2}-\phi\int_{t}^{T}\left({y}_{s}^{\nu}\right)^{2}ds\right]\ ,
\end{align}
and the value function is
\begin{align}
\label{eq:valuefunc_model2}
    u(t,{x},{y},Z,\kappa )=\underset{\nu\in\mathcal{A}}{\sup}\,u^{\nu}(t,{x},{y},Z,\kappa )\,.
\end{align}

The value function \eqref{eq:valuefunc_model2} is the unique classical solution to the HJB equation
\begin{equation}
    \label{eq:HJB_model2}
    \begin{split}
        0=&\,\partial_{t}w-\phi\,y^{2}+\frac{1}{2}\,\gamma^{2}\,Z^{2}\,\partial_{ZZ}w+\frac{1}{2}\,\varsigma^{2}\,\kappa^{2}\,\partial_{\kappa\kappa}w\\ 
        &+\underset{\nu\in\R}{\sup}\left(\left(\nu\,Z-\eta\,\frac{Z^{3/2}}{\kappa}\,\nu^{2}\right)\partial_{x}w-\nu\,\partial_{y}w\right)\,,
    \end{split}
\end{equation}
with terminal condition
\begin{equation}\label{eq:termcondu_model2}
    w(T,x,y,Z,\kappa) =x+y\,Z-\alpha\,y^{2}\,.
\end{equation}

The terminal condition \eqref{eq:termcondu_model2} suggests the ansatz
\begin{equation}
    \label{eq:ansatz1_Model2}
    w(t,x,y,Z,\kappa)=x+y\,Z+\theta(t,y,Z,\kappa)\,,
\end{equation}
which we justify by the following proposition, for which a proof is straightforward.

\begin{prop}
Assume there exists a function $\theta\in C^{1,1,2,2}([0,T]\times\R\times\R_{++}\times\R_{++})$ that solves
\begin{equation}\label{eq:PDE_Model2}
    0=\partial_{t}\theta-\phi\,y^{2}+\frac{1}{2}\,\gamma^{2}\,Z^{2}\,\partial_{ZZ}\theta+\frac{1}{2}\,\varsigma^{2}\,\kappa^{2}\,\partial_{\kappa\kappa}\theta+\underset{\nu\in\R}{\sup}\left(-\eta\,\frac{Z^{3/2}}{\kappa}\,\nu^{2}-\nu\,\partial_{y}\theta\right)\,,
\end{equation}
with terminal condition
\begin{equation}\label{eq:terminal_liquid_ansatz}
    \theta(T,y,Z,\kappa)=-\alpha\,y^{2}\,.
\end{equation}
Then, the function
\begin{equation}
\label{eq:ansatz_sec4}
    \begin{split}
        w(t,x,y,Z,\kappa)=x+y\,Z+\theta(t,y,Z,\kappa)
    \end{split}
\end{equation}
is a solution to \eqref{eq:HJB_model2} with terminal condition \eqref{eq:termcondu_model2}\,.
\end{prop}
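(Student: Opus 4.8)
The plan is to verify the ansatz directly: substitute $w(t,\tilde{x},\tilde{y},Z,\kappa)=\tilde{x}+\tilde{y}\,Z+\theta(t,\tilde{y},Z,\kappa)$ into the HJB equation \eqref{eq:HJB_model2} and check that it collapses to the semilinear PDE \eqref{eq:PDE_Model2} together with the matching terminal data. Since $\tilde{x}$ and the bilinear term $\tilde{y}\,Z$ are time-independent and $\theta$ does not depend on $\tilde{x}$, the required partial derivatives are immediate: $\partial_t w=\partial_t\theta$, $\partial_{\tilde{x}}w=1$, $\partial_{\tilde{y}}w=Z+\partial_{\tilde{y}}\theta$, $\partial_{ZZ}w=\partial_{ZZ}\theta$, and $\partial_{\kappa\kappa}w=\partial_{\kappa\kappa}\theta$. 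The regularity hypothesis $\theta\in C^{1,1,2,2}$ guarantees that all of these exist and are continuous, so every term in \eqref{eq:HJB_model2} is well defined.

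First I would insert these derivatives into the Hamiltonian (the supremum term) of \eqref{eq:HJB_model2}. Because $\partial_{\tilde{x}}w=1$, the bracket $\left(\nu\,Z-\eta\,\tfrac{Z^{3/2}}{\kappa}\,\nu^2\right)\partial_{\tilde{x}}w-\nu\,\partial_{\tilde{y}}w$ becomes $\nu\,Z-\eta\,\tfrac{Z^{3/2}}{\kappa}\,\nu^2-\nu\,(Z+\partial_{\tilde{y}}\theta)$. The key observation is that the two linear-in-$\nu$ contributions $\nu\,Z$ and $-\nu\,Z$ cancel exactly, leaving the Hamiltonian equal to $-\eta\,\tfrac{Z^{3/2}}{\kappa}\,\nu^2-\nu\,\partial_{\tilde{y}}\theta$, which is precisely the supremand in \eqref{eq:PDE_Model2}. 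This cancellation is the whole point of the ansatz: subtracting the mark-to-market value $\tilde{x}+\tilde{y}\,Z$ removes the drift $\nu\,Z$ generated by the cash dynamics \eqref{eq:cashProcess_model2} against the inventory term, so that only the strictly concave execution-cost penalty in $\nu$ survives.

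It then remains to collect the non-Hamiltonian terms. Substituting $\partial_t w=\partial_t\theta$, $\partial_{ZZ}w=\partial_{ZZ}\theta$, and $\partial_{\kappa\kappa}w=\partial_{\kappa\kappa}\theta$ into the drift and diffusion terms of \eqref{eq:HJB_model2}, the equation reduces term-by-term to \eqref{eq:PDE_Model2} on $[0,T)\times\R\times\R_{++}\times\R_{++}$. Finally, evaluating the ansatz at $t=T$ and using $\theta(T,\tilde{y},Z,\kappa)=-\alpha\,\tilde{y}^2$ from \eqref{eq:terminal_liquid_ansatz} gives $w(T,\cdot)=\tilde{x}+\tilde{y}\,Z-\alpha\,\tilde{y}^2$, which is exactly \eqref{eq:termcondu_model2}, completing the verification.

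There is no genuine obstacle here; the result is a bookkeeping verification, as the remark preceding the statement anticipates. The only point requiring care is confirming that the $Z$- and $\kappa$-processes carry no drift, so that the first-order derivative $\partial_Z w=\tilde{y}+\partial_Z\theta$ never enters \eqref{eq:HJB_model2}. This is what makes the reduction to \eqref{eq:PDE_Model2} exact rather than leaving a residual $\tilde{y}$-dependent term; by contrast, in Model I the mean-reverting drift $\beta\,(S-Z)$ in \eqref{eq:ZProcess} does produce the extra $\beta\,(S-Z)\,\tilde{y}$ term visible in \eqref{eq:hjbtheta1}.
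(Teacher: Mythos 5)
Your verification is correct and is exactly the argument the paper has in mind when it declares the proof ``straightforward'': direct substitution of the ansatz into \eqref{eq:HJB_model2}, cancellation of the $\nu\,Z$ terms via $\partial_{\tilde{x}}w=1$ and $\partial_{\tilde{y}}w=Z+\partial_{\tilde{y}}\theta$, reduction of the diffusion terms, and matching of the terminal condition. Your closing observation that the driftless dynamics of $Z$ and $\kappa$ are what prevent a residual $\tilde{y}$-dependent term (unlike Model I) is accurate and a useful clarification, though not needed beyond the bookkeeping.
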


Next, solve the first order condition in \eqref{eq:PDE_Model2} to obtain the LT's trading speed in feedback form
\begin{equation}\label{eq:strat_liquid}
    \nu^{\star}=	-\frac{\kappa}{2\,\eta}\,\partial_{y}\theta\,Z^{-3/2}\,.
\end{equation}
Substitute \eqref{eq:strat_liquid} into \eqref{eq:PDE_Model2} to write
\begin{equation}\label{eq:liquid_pde_theta}
    0=\partial_{t}\theta-\phi\,y^{2}+\frac{1}{2}\,\gamma^{2}\,Z^{2}\,\partial_{ZZ}\theta+\frac{1}{2}\,\varsigma^{2}\,\kappa^{2}\,\partial_{\kappa\kappa}\theta+\frac{\kappa}{4\,\eta}\,\left(\partial_{y}\theta\right)^{2}\,Z^{-3/2}\,.
\end{equation}

Finally, simplify \eqref{eq:liquid_pde_theta} with the ansatz
\begin{equation}\label{eq:second_liquid_ansatz}
    \theta(t,y,Z,\kappa)=\theta_{0}(t,Z,\kappa)+\theta_{1}(t,Z,\kappa)\,y+\theta_{2}(t,Z,\kappa)\,y^{2}\,,
\end{equation}
which is justified by the following proposition, for which a proof is straightforward.
\begin{prop}
Assume there exist functions $\theta_0\in C^{1,2,2}([0,T]\times\R_{++}\times\R_{++})$, $\theta_1\in C^{1,2,2}([0,T]\times\R_{++}\times\R_{++})$, and $\theta_2\in C^{1,2,2}([0,T]\times\R_{++}\times\R_{++})$ which solve the system of PDEs
\begin{equation}\label{eq:liquid_system}
    \left\{
    \begin{aligned}
    &0=\partial_{t}\theta_{2}-\phi+\frac{1}{2}\,\gamma^{2}\,Z^{2}\,\partial_{ZZ}\theta_{2}+\frac{1}{2}\,\varsigma^{2}\,\kappa^{2}\,\partial_{\kappa\kappa}\theta_{2}+\frac{\kappa}{\eta}\,\theta_{2}^{2}\,Z^{-3/2}\,,\\
    &0=\partial_{t}\theta_{1}+\frac{1}{2}\,\gamma^{2}\,Z^{2}\,\partial_{ZZ}\theta_{1}+\frac{1}{2}\,\varsigma^{2}\,\kappa^{2}\,\partial_{\kappa\kappa}\theta_{1}+\frac{\kappa}{\eta}\,\theta_{1}\,\theta_{2}\,Z^{-3/2}\,,\\
    &0=\partial_{t}\theta_{0}+\frac{1}{2}\,\gamma^{2}\,Z^{2}\,\partial_{ZZ}\theta_{0}+\frac{1}{2}\,\varsigma^{2}\,\kappa^{2}\,\partial_{\kappa\kappa}\theta_{0}+\frac{\kappa}{4\,\eta}\,\theta_{1}^{2}\,Z^{-3/2}\,,
    \end{aligned}
    \right.
\end{equation}
on $[0,T)\times\R_{++}\times\R_{++}$ with terminal conditions
\begin{equation}\label{eq:liquid_system_terminal}
    \theta_{2}(T,Z,\kappa)=-\alpha\,,\ \ \theta_{1}(T,Z,\kappa)=0\,,\ \,\text{and}\ \ \theta_{0}(T,Z,\kappa)=0\,.
\end{equation}
Then, the function 
\begin{equation}
    \theta(t,y,Z,\kappa)=\theta_{0}(t,Z,\kappa)+\theta_{1}(t,Z,\kappa)\,y+\theta_{2}(t,Z,\kappa)\,y^{2}\,,
\end{equation}
solves \eqref{eq:liquid_pde_theta} with terminal condition \eqref{eq:terminal_liquid_ansatz}\,.
\end{prop}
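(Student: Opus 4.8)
The plan is to verify directly that the quadratic ansatz \eqref{eq:second_liquid_ansatz} reduces the scalar semilinear PDE \eqref{eq:liquid_pde_theta} to the system \eqref{eq:liquid_system}, so that the assumed existence of the triple $(\theta_0,\theta_1,\theta_2)$ immediately furnishes a solution $\theta$. First I would substitute $\theta=\theta_0+\theta_1\,\tilde{y}+\theta_2\,\tilde{y}^2$ into the linear part of \eqref{eq:liquid_pde_theta}. Because $\partial_t$, $\partial_{ZZ}$ and $\partial_{\kappa\kappa}$ act only on the $(t,Z,\kappa)$-dependence and leave the powers of $\tilde{y}$ untouched, each of these terms is again a polynomial of degree two in $\tilde{y}$ whose coefficients are the corresponding derivatives of $\theta_0$, $\theta_1$ and $\theta_2$.

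The only nonlinear piece requires a short computation: from $\partial_{\tilde{y}}\theta=\theta_1+2\,\theta_2\,\tilde{y}$ one obtains $(\partial_{\tilde{y}}\theta)^2=\theta_1^2+4\,\theta_1\,\theta_2\,\tilde{y}+4\,\theta_2^2\,\tilde{y}^2$. The key structural observation is that squaring a function affine in $\tilde{y}$ produces again a polynomial of degree exactly two; this is precisely why the quadratic ansatz is self-consistent and the problem closes without generating higher powers. Collecting the running penalty $-\phi\,\tilde{y}^2$ together with these contributions, the left-hand side of \eqref{eq:liquid_pde_theta} becomes a polynomial $P_0+P_1\,\tilde{y}+P_2\,\tilde{y}^2$ in $\tilde{y}$, where $P_2$, $P_1$ and $P_0$ are exactly the right-hand sides of the first, second and third equations of \eqref{eq:liquid_system}, respectively, after using $\tfrac{\kappa}{4\,\eta}\cdot 4=\tfrac{\kappa}{\eta}$ to match the nonlinear coefficients at orders $\tilde{y}^2$ and $\tilde{y}^1$ while the $\tilde{y}^0$ coefficient retains the factor $\tfrac{\kappa}{4\,\eta}$.

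Since a polynomial in $\tilde{y}$ vanishes identically on $\R$ if and only if all its coefficients vanish, the assumed identities $P_0=P_1=P_2=0$ show that $\theta$ solves \eqref{eq:liquid_pde_theta} for every $\tilde{y}\in\R$ and every $(t,Z,\kappa)\in[0,T)\times\R_{++}\times\R_{++}$. The required regularity $\theta\in C^{1,1,2,2}$ follows because $\theta$ is a polynomial in $\tilde{y}$ with coefficients in $C^{1,2,2}$, so all the derivatives appearing in \eqref{eq:liquid_pde_theta} exist and are continuous. Finally, evaluating the ansatz at $t=T$ and using the terminal data \eqref{eq:liquid_system_terminal}, namely $\theta_2(T,\cdot)=-\alpha$ and $\theta_1(T,\cdot)=\theta_0(T,\cdot)=0$, gives $\theta(T,\tilde{y},Z,\kappa)=-\alpha\,\tilde{y}^2$, which is the terminal condition \eqref{eq:terminal_liquid_ansatz}. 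There is no genuine obstacle here; the only point needing care is the bookkeeping of the factor in the nonlinear term and confirming that no powers of $\tilde{y}$ beyond the second are produced, which is guaranteed by the affine form of $\partial_{\tilde{y}}\theta$.
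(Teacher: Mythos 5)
Your proof is correct and is precisely the ``straightforward'' verification the paper has in mind when it omits the argument: substitute the quadratic ansatz into \eqref{eq:liquid_pde_theta}, use that $(\partial_{\tilde{y}}\theta)^2=\theta_1^2+4\,\theta_1\theta_2\,\tilde{y}+4\,\theta_2^2\,\tilde{y}^2$ stays quadratic in $\tilde{y}$, and match coefficients of $\tilde{y}^0,\tilde{y}^1,\tilde{y}^2$ with the three equations of \eqref{eq:liquid_system}, together with the terminal data. Your bookkeeping of the $\tfrac{\kappa}{4\eta}$ versus $\tfrac{\kappa}{\eta}$ factors and the terminal-condition check are both accurate, so nothing is missing.
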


The optimal strategy in feedback form \eqref{eq:strat_liquid} is given by
\begin{equation}
\label{eq:feedbackformliquid}
    \nu^{\star}=-\frac{\kappa}{2\,\eta}\,\left(2\,\theta_2\,y+\theta_1\right)\,Z^{-3/2}\,.
\end{equation}

The system of PDEs in \eqref{eq:liquid_system} can be solved sequentially as follows. Solve the first PDE in the system to obtain $\theta_2$. Substitute $\theta_2$ is the second and third equations of the system so that the PDEs in $\theta_1$ and $\theta_0$ become linear. We cannot solve the semilinear PDE in $\theta_2$ in closed form, and providing an existence result is out of the scope of this work. 

\subsection{Closed-form approximation strategy \label{apx:closedformapprox model2}}

Here, we show how to obtain the closed-from approximation strategy \eqref{eq:pseudooptimal Model2} which can be implemented by the LT in real time and which accounts for the stochastic convexity costs in the pool.  First, similar to Section \ref{sec:Model}, we derive a strategy with constant convexity costs. 

Let $\zeta >0$ denote the constant convexity costs and let $\nu^{\zeta}\in\Ac^{\zeta}$.  Here, for each $\zeta$, the set $\mathcal A_t^{\zeta}$ of admissible strategies is similar to the admissible set \eqref{def:admissibleset_t_modelI}. Follow similar steps as those in Section \ref{sec:cfapprox} to obtain the new value function $$w^\zeta(t,x,y,\kappa,Z)=x+y\,Z+A_{\zeta}(t)\,y^{2}+B_{\zeta}(t)\,y+C_{\zeta}(t)\,,$$ 
where the system of PDEs in \eqref{eq:liquid_system} simplifies to the ODE system 
\begin{equation}\label{eq:liquid_system_new} 
\begin{cases}
0=A_{\zeta}'(t)-\phi+\frac{1}{\eta\,\zeta}A_{\zeta}(t)^{2}\,,\\
0=B_{\zeta}'(t)+\frac{1}{\eta\,\zeta}A_{\zeta}(t)\,B_{\zeta}(t)\,,\\
0=C_{\zeta}'(t)+\frac{1}{4\,\eta\,\zeta}B_{\zeta}(t)^{2}\,,
\end{cases} \implies \begin{cases}
A_{\zeta}(t)=\sqrt{\phi\,\eta\,\zeta}\,\tanh\left(\sqrt{\frac{\phi}{\eta\,\zeta}}\,\,(T-t)+\arctanh\left(-\frac{\alpha}{\sqrt{\phi\,\eta\,\zeta}}\right)\right)\,,\\
B_{\zeta}(t)=0\,,\\
C_{\zeta}(t)=0\,,
\end{cases}
\end{equation}
and the optimal strategy  \eqref{eq:feedbackformliquid} becomes
\begin{equation}\label{eq:feedbackformliquid_new} 
\nu^{\zeta,\star}=-y\,\sqrt{\frac{\phi}{\eta\,\zeta}}\,\tanh\left(\sqrt{\frac{\phi}{\eta\,\zeta}}\,\,(T-t)+\arctanh\left(-\frac{\alpha}{\sqrt{\phi\,\eta\,\zeta}}\right)\right)\,.
\end{equation}

Take hyperrectangles as in Section \ref{sec:cfapprox} to obtain the closed-from approximation strategy \eqref{eq:pseudooptimal Model2} when rates form in the DEX.} 

\section{Model of Section \ref{sec:Model3} \label{apx:Model3}}

This section obtains the trading strategy \eqref{eq:model3:approx_strategy} when rates form simultaneously in both the DEX and the CEX. Let $\zeta$ denote the constant convexity costs in the AMM. 
The performance criterion of the LT is
\begin{align}
\label{eq:apx:perfcriteria_model3}
    u_{\nu}^{\zeta}(t,x,y,\boldsymbol{P})=\E_{t,x,y,\boldsymbol{P}}\left[x_{T}^{\zeta}+y_{T}^{\zeta}\,\boldsymbol{\mathcal{X}}\,\boldsymbol{P}_{t}-\alpha\,\left(y_{T}^{\zeta}\right)^{2}-\phi\,\int_{t}^{T}\left(y_{s}^{\zeta}\right)^{2}\,ds\right]\,,
\end{align}
and the value function is
\begin{align}
\label{eq:apx:valuefunc_model3}
    u^{\zeta}(t,x,y,\boldsymbol{P})=\underset{\nu^{\zeta}\in\mathcal{A}^{\zeta}}{\sup}\,u_{\nu}^{\zeta}(t,x,y,\boldsymbol{P})\, .
\end{align}

The value function \eqref{eq:apx:valuefunc_model3} is the unique classical solution to the HJB equation
\begin{align}
\label{eq:apx:HJB_model3}
0=\,&\partial_{t}w^\zeta-\phi\,y^{2}+\left(\boldsymbol{\overline{P}}-\boldsymbol{P}\right)^{\intercal}\,\boldsymbol{{\Pi}}\,\partial_{\boldsymbol{P}}w^\zeta+\frac{1}{2}\,\text{Tr}\,\left(\boldsymbol{\Sigma}\,\partial_{\boldsymbol{P}\boldsymbol{P}}w^\zeta\right)\\&+\sup_{\nu\in\mathbb{R}}\left(\left(\boldsymbol{\mathcal{X}}\,\boldsymbol{P}\,\nu-\eta\,\zeta\,\nu^{2}\right)\partial_{x}w^\zeta-\nu\,\partial_{y}w^\zeta\right)\ ,
\end{align}
with terminal condition
\begin{equation}\label{eq:apx:termcondu_model3}
    w^\zeta(T,x,y,\boldsymbol{P})=x+y\,\boldsymbol{\mathcal{X}}\,\boldsymbol{P}-\alpha\,y^{2}\, .
\end{equation}

The terminal condition \eqref{eq:apx:termcondu_model3} suggests the ansatz 
\begin{equation}
    \label{eq:apx:ansatz1_Model3}
    w^\zeta(t,x,y,\boldsymbol{P})=x+y\,\boldsymbol{\mathcal{X}}\,\boldsymbol{P}+\theta^{\zeta}(t,y,\boldsymbol{P})\,,
\end{equation}
which leads to the  HJB  
\begin{equation}\label{eq:apx:PDE_Model3}
0=\partial_{t}\theta^{\zeta}-\phi\,y^{2}+\left(\boldsymbol{\overline{P}}-\boldsymbol{P}\right)^{\intercal}\,\boldsymbol{{\Pi}}\,\left(y\,\boldsymbol{\mathcal{X}}^{\intercal}+\partial_{\boldsymbol{P}}\theta^{\zeta}\right)+\frac{1}{2}\,\text{Tr}\,\left(\boldsymbol{\Sigma}\,\partial_{\boldsymbol{P}\boldsymbol{P}}\theta^{\zeta}\right)\ +\frac{\left(\partial_{y}\theta^{\zeta}\right)^{2}}{4\,\eta\,\zeta}\,,
\end{equation}
with terminal condition
\begin{equation}\label{eq:terminal_liquid_ansatz}
   \theta^{\zeta}(T,y,\boldsymbol{P})=-\alpha\,y^{2}\ ,
\end{equation}
where the optimal strategy in feedback form is 
\begin{equation}\label{eq:apx:strat feedback model 3}
    \nu^{\zeta, \star}=-\frac{1}{2\,\eta\,\zeta}\partial_{y}\theta^{\zeta}\,.
\end{equation}

Following similar steps to those in Sections \ref{apx:Model} and \ref{apx:Model2}, we use the ansatz 
\begin{equation}\label{eq:apx:ansatz 2 model 3}
\theta^{\zeta}(t,y,\boldsymbol{P})=\,A_{\zeta}(t)\,y^{2}+y\,B_{\zeta}(t)\,\boldsymbol{P}+C_{\zeta}(t)\,y+\boldsymbol{P}^{\intercal}D_{\zeta}(t)\,\boldsymbol{P}+E_{\zeta}(t)\,\boldsymbol{P}+F_{\zeta}(t)\,,    
\end{equation}
to obtain the system of ODEs:
\begin{equation}\label{eq:odesystem model 3}
\begin{cases}
0= & A'_\zeta(t)-\phi+\frac{1}{\eta\,\zeta}A_{\zeta}^{2}(t)\,,\\
0= & B_{\zeta}'(t)-\boldsymbol{\mathcal{X}}\,\boldsymbol{\Pi}^{\intercal}-B_{\zeta}(t)\,\boldsymbol{\Pi}^{\intercal}+\frac{1}{\eta\,\zeta}\,A_{\zeta}(t)\,B_{\zeta}(t)\,,\\
0= & C_{\zeta}'(t)+\boldsymbol{\overline{P}}^{\intercal}\,\boldsymbol{\beta}\,\boldsymbol{\mathcal{X}}^{\intercal}+\boldsymbol{\overline{P}}^{\intercal}\,\boldsymbol{\Pi}\,B_{\zeta}(t)^{\intercal}+\frac{1}{\eta\,\zeta}\,A_{\boldsymbol{\zeta}}(t)\,C_{\zeta}(t)\,,\\
0= & D_{\zeta}'(t)+\frac{1}{4\,\eta\,\zeta}B_{\zeta}(t)^{\intercal}\,B_{\zeta}(t)\,,\\
0= & E_{\zeta}'(t)-E_{\zeta}(t)^{\intercal}\,\boldsymbol{{\Pi}}^{\intercal}+\frac{1}{2\,\eta\,\zeta}\,C_{\zeta}(t)^{\intercal}\,B_{\zeta}(t)\,,\\
0= & F_{\zeta}'(t)+\boldsymbol{\overline{P}}^{\intercal}\,\boldsymbol{{\Pi}}\,E_{\zeta}(t)+Tr\left(\boldsymbol{\Sigma}\,D_{\zeta}(t)\right)+\frac{1}{4\,\eta\,\zeta}\,C_{\zeta}(t)^{\intercal}\,C_{\zeta}(t)\,,
\end{cases}
\end{equation}
with terminal conditions $A_{\zeta}(T)=-\alpha$ and $B_{\zeta}(t)=C_{\zeta}(t)=D_{\zeta}(t)=E_{\zeta}(t)=F_{\zeta}(t)=0\,.$

The system of ODEs \eqref{eq:odesystem model 3} admits the solution 
\begin{equation}\label{eq:apx:solution ode system model 3}
\begin{cases}
A_{\zeta}(t)= & \sqrt{\phi\,\eta\,\zeta}\,\tanh\left(\sqrt{\frac{\phi}{\eta\,\zeta}}\,(T-t)+\arctanh\left(-\frac{\alpha}{\sqrt{\phi\,\eta\,\zeta}}\right)\right)\,,\\
B_{\zeta}(t)= & -\int_{t}^{T}e^{\int_{t}^{s}\frac{1}{\eta\,\zeta}\,A_{\boldsymbol{\zeta}}(u)\,du}\,\boldsymbol{\mathcal{X}}\,\boldsymbol{\Pi}^{\intercal}\,e^{-\boldsymbol{{\Pi}}^{\intercal}(s-t)}ds\,,\\
C_{\zeta}(t)= & -\boldsymbol{\overline{P}}\,B_{\zeta}(t)\,,\\
D_{\zeta}(t)= & E_{\zeta}(t)=F_{\zeta}(t)=0\,,
\end{cases}
\end{equation}
so the optimal strategy in \eqref{eq:apx:strat feedback model 3} becomes
$$\nu^{\zeta,\star}=-\frac{1}{\eta\,\zeta}\,A_{\zeta}(t)\,y+\frac{1}{2\,\eta\,\zeta}\,B_{\zeta}(t)\,\left(\boldsymbol{\overline{P}}-\boldsymbol{P}\right)\,.$$

Finally, take hyperrectangles as in Section \ref{sec:cfapprox}
to obtain the closed-form approximation strategy \eqref{eq:model3:approx_strategy} when rates form in both the CEX and DEX.

}

\clearpage
\bibliographystyle{elsarticle-harv}
\bibliography{references}

\end{document}